\newcommand{\NE}{\textsc{ne}\xspace}
\newcommand{\remove}[1]{}
\author{C. \`Alvarez \and A. Messegu\'e}
\institute{ALBCOM Research Group, Computer Science Department, UPC, Barcelona\\
\email{\{alvarez,messegue\}@cs.upc.edu}}
\title{Network Creation Games: Structure vs Anarchy}
\begin{document}

\maketitle
\begin{abstract}

We study Nash equilibria and the price of anarchy in  the classical model of Network Creation Games introduced by Fabrikant et al. In this model every agent (node) buys links at a prefixed price $\alpha>0$ in order to get connected to the network formed by all the $n$ agents.
In this setting, the  reformulated tree conjecture  states that for $\alpha > n$, every Nash equilibrium network is a tree. Since  it was shown that the price of anarchy for trees is constant, if the tree conjecture were true, then the price of anarchy would be constant for $\alpha >n$. 
Moreover, Demaine et al.  conjectured  that  the price of anarchy for this model is constant. 
 
Up to now the last conjecture has been proven in (i) the \emph{lower range},  for $\alpha = O(n^{1-\epsilon})$ with $\epsilon \geq \frac{1}{\log n}$ and (ii) 
in the \emph{upper range},  for $\alpha > 65n$.
In contrast,  the best upper bound known for the price of anarchy for  the remaining  range  is $2^{O(\sqrt{\log n})}$. 

In this paper we give new insights into the structure of the Nash equilibria for different ranges of $\alpha$ and we enlarge the  range  for which the price of anarchy is  constant. 
Regarding the upper range,  we prove  that every Nash equilibrium is a tree for $\alpha > 17n$ and that the price of anarchy is constant even for $\alpha > 9n$. In the lower range, we show that any Nash equilibrium for $\alpha < n/C$ with $C > 4$, induces an $\epsilon-$distance-almost-uniform graph. 

\end{abstract} 
\section{Introduction}
\label{sec:intro}

This article focuses its attention on the \emph{sum classic network creation game}  introduced by Fabrikant et al. in \cite{Fe:03}.
This strategic game models Internet-like networks without central coordination. In this model the distinct agents, who can be thought as nodes in a graph,  establish links of constant price $\alpha$ to the other agents in order to be connected in the resulting network.  We analyze the structure of   the resulting equilibrium networks as well as their performance under the price of anarchy. Hence, our main elements of interest are \emph{Nash equilibria} (\NE), configurations where every agent is not interested in deviating his current strategy, and the \emph{price of anarchy} ($PoA$), a measure of how the efficiency of the system degrades due to selfish behaviour of its agents. 

 \textbf{Related work.} In the seminal article from Fabrikant et al. \cite{Fe:03} it was shown that the $PoA$ of sum classic network creation games  is $O(\sqrt{\alpha})$. In the subsequent years the range $\alpha$ for which the $PoA$ is constant has been  enlarged. Table 1 contains a summary of the best upper bounds on the $PoA$ for the different values of the parameter $\alpha$. 
 For the lower range, it was  proved that the $PoA$ is constant for $\alpha = O(\sqrt{n})$ in \cite{Lin} and \cite{Albersetal:06}, independently. Afterwards, this range was enlarged  in \cite{Demaineetal:07} by showing that the $PoA$ is constant for $\alpha = O(n^{1-\epsilon})$, with $\epsilon \geq 1/\log n$. Futhermore, in \cite{Demaineetal:07}, the authors  also provided better upper bounds  for  $\alpha < \sqrt[3]{n/2}$ and for $\alpha < \sqrt{n/2}$.  For the upper range, it was first proved that the $PoA$ is constant for  $\alpha = \Omega(n^{3/2})$ in \cite{Lin} and later,  a constant upper bound on the $PoA$ was  also shown  for  $\alpha \geq 12n  \log n $ in \cite{Albersetal:06}.  Subsequently, it was proved that any \NE is a tree, first for the range $\alpha > 273n$ in \cite{Mihalakmostly} and more recently for the range $\alpha > 65n$ in \cite{Mihalaktree}. Hence,  the $PoA$ is constant for $\alpha > 65n$.  For the remaining range, it was first proven in \cite{Albersetal:06} an upper bound on the $PoA$ of $15\left(1 + \min \left\{ \alpha^2/n, n^2/\alpha\right\}^{1/3}\right)$ for $\alpha < n$ . Later on, this result was improved to $2^{O(\sqrt{\log n})}$ for $\alpha < 12n \log n$ in \cite{Demaineetal:07}.

  \begin{center}
  \resizebox{\columnwidth}{!}{
  \begin{tabular}{c|c|c|c|c|c|c|c|c|c|c|c|c|c|c|c|c|c|c|c|c|}
  \multicolumn{2}{c}{\hspace{0.0cm}$\alpha = 0$ } & \multicolumn{2}{c}{\hspace{0.0cm}1}  & \multicolumn{2}{c}{\hspace{1.2cm}2}  & \multicolumn{2}{c}{\hspace{0.7cm}$\sqrt[3]{n/2}$}  & \multicolumn{2}{c}{\hspace{0.5cm}$\sqrt{n/2}$} & \multicolumn{2}{c}{\hspace{0.5cm}$O(n^{1-\epsilon})$} & \multicolumn{2}{c}{\hspace{1.45cm}$9n$}  & \multicolumn{2}{c}{\hspace{1.8cm}$17n$} & \multicolumn{2}{c}{\hspace{1.5cm}$65n$} & \multicolumn{2}{c}{\hspace{0.4cm}$12n \log n$}  &  \multicolumn{1}{c}{\hspace{0.5cm}$\infty$}\\

  &  \multicolumn{2}{c|}{}  &  \multicolumn{2}{c|}{}  & \multicolumn{2}{|c|}{}  & \multicolumn{2}{|c|}{} & \multicolumn{2}{|c|}{} & \multicolumn{2}{|c|}{}  & \multicolumn{2}{|c|}{} & \multicolumn{2}{|c|}{}  & \multicolumn{2}{|c|}{} & \multicolumn{2}{|c|}{}    \\
\cline{2-21}
  
 $PoA$ & \multicolumn{2}{|c|}{1}  &  \multicolumn{2}{|c|}{$\leq \frac{4}{3}$ (\cite{Fe:03})}  & \multicolumn{2}{|c|}{$\leq 4$ (\cite{Demaineetal:07})}  & \multicolumn{2}{|c|}{$\leq 6$ (\cite{Demaineetal:07})} & \multicolumn{2}{|c|}{$\Theta(1)$ (\cite{Demaineetal:07})} & \multicolumn{2}{|c|}{$2^{O(\sqrt{\log n})}$ (\cite{Demaineetal:07})}  & \multicolumn{2}{|c|}{$\Theta(1)$ (Thm. \ref{thm_9n}) }  & \multicolumn{2}{|c|}{$<5$ (Thm. \ref{thm_tree})} & \multicolumn{2}{|c|}{$<5$ (\cite{Mihalaktree})} & \multicolumn{2}{|c|}{$1.5$ (\cite{Albersetal:06}) }     \\
  \cline{2-21}
  \end{tabular}
  }

  \vskip 5pt
 \noindent {\small \textbf{Table 1.} Summary of the best known bounds for the $PoA$ for the sum classic game.}
 
  \end{center}
  
In all these previous results one can see that analyzing the structure of the resulting \NE graphs is very useful for shedding light on the $PoA$.  In particular, in \cite{Fe:03} it was shown that any tree Nash equilibrium is less than 5 times more costly than the social optimum.
In \cite{Fe:03}  it was conjectured that there is a constant $A$ for which every \NE is a tree whenever $\alpha > A$. This \emph{tree conjecture}  was refuted later in \cite{Albersetal:06}. Nevertheless, if the condition $\alpha > A$ is relaxed to the condition $\alpha > f(n)$, the tree conjecture can be formulated. As we have pointed before,  the tree conjecture is known to be true   for $\alpha > 65n$. In order to show this result,  the auhors   in  \cite{Mihalaktree} provided new  insights on the local structure of the \NE, in particular on the length of the network's shortest cycle and on the average degree. The tree conjecture is expected to be true for $\alpha > n$.  

Furthermore, in \cite{Demaineetal:07} it was shown that for $\alpha \geq 2$ the $PoA$ of a \NE graph $G$ is upper bounded by  $diam(G)+1$.
Hence, related with the structure of the \NE graphs, a fundamental key question is how well they globally minimize the diameter in order to have a low $PoA$.
 
 \textbf{Our results.} We show new  local properties on the structure of the Nash equilibria that allow us to enlarge the  range  of $\alpha$ for which the price of anarchy is  constant as well as the range of $\alpha$  for which every \NE is a tree. 

In section 3, by obtaining a lower bound on  the average degree of a $2-$edge connected component of a \NE graph, we can prove that that every \NE  is a tree
for $\alpha > 17n$. Furthermore, we show an upper bound for the diameter of a \NE graph in terms of the diameter of its $2-$edge connected components.  This result jointly with an improvement on  the lower  bound on the average degree of the $2-$edge connected components of  \NE lead us to  show that the $PoA$ is constant even for $\alpha > 9n$.

In section 4 we show that the fourth power of any \NE  is  an $\epsilon-$distance-almost-uniform graph (for an appropriate $\epsilon > 0$)  for the range $\alpha < n/C$ with $C>4$. Distance-almost-uniform graphs were introduced  by Alon et al. in 
 \cite{AlonDHKL14} where it was  conjectured that distance uniform  graphs have diameter $O(\log n)$. 
Recently  in \cite{LaLo:17},  M. Lavrov and Lo refute  such conjecture.


\section{Preliminaries}
\label{sec:model}
 \textbf{The model.} A \emph{network creation game} is defined by a set of players $V = \left\{1,2,....,n \right\}$ and a positive parameter $\alpha$. Each player $u$ represents a node of an undirected graph and $\alpha$ the cost per link. The \emph{strategy} of a player $u \in V$ is denoted by $s_u$ and is a subset $s_u \subseteq V \setminus \left\{u \right\}$ which represents the set of nodes to which player $u$ wants to be connected. The strategies of all players define the \emph{strategy vector} $s=(s_{u})_{u \in V}$. The \emph{communication network} associated to a strategy vector $s$ is then  defined as the undirected graph $G_s = (V, \left\{uv \mid v \in s_u \lor u \in s_v \right\})$, which is the natural network formed by the choices of the players. For the sake of convenience $G_s$ can be understood as directed or undirected at the same time. On the one hand, we consider the directed version when we are interested in the strategies of the players defining the communication graph. On the other hand, we focus on the undirected version when we want to study the properties of the topology of the communication graph. The cost associated to a player $u \in V$ is $c_u(s) =  \alpha |s_u| + \sum_{v \neq u} d_{G_s}(u,v)$, where $d_{G_s}(u,v)$ is the distance between $u$ and $v$ in $G_s$. Thus, the social cost $c(s)$ of the strategy vector $s$ is defined by the sum of the individual costs, i.e. $c(s)= \sum_{u \in V}{c_u(s)}$. A \emph{Nash Equilibrium} ( \NE) is a strategy vector $s$ such that  for every player $u$ and 
 every strategy vector $s'$ differing from $s$ in only $u$, $c_u(s) \leq c_u(s')$. In a \NE no player has incentive to deviate individually his strategy. Finally, let $E$ be the set of \NE. The \emph{price of anarchy} is the ratio $PoA= \max_{s \in E}c(s)/\min_{s}c(s)$. $PoA$ is understood as a measure of how the efficiency of the system may be degraded due to selfish behaviour of the agents.



 \textbf{Graphs.} In a digraph $G$ the edges are considered to have an orientation and $(u,v)$ denotes an edge from $u$ to $v$. In contrast, for an undirected graph $G$, the edge from $u$ to $v$ is the same as the edge from $v$ to $u$ and it is denoted as $uv$.
Given a digraph $G = (V,E)$, a node $v \in V$  and $X \subseteq G$ a subgraph of $G$ let $deg_X^+(v) = | \left\{ u \in V(X) \mid (v, u) \in E\right\}| $, $deg_X^-(v) =  | \left\{ u \in V(X) \mid  (u,v) \in E\right\}|$ and $deg_X(v) = deg_X^+(v)+deg_X^-(v)$. Likewise, if $G = (V,E)$ is an undirected graph and $v \in V$ any node we define $deg_X(v) = |\left\{ u \in V(X) \mid  uv \in E \right\}|$. If $X = G$ then we drop the reference to $G$ and write $deg^+(v),deg^-(v), deg(v)$ instead of $deg_G^+(v),deg_G^-(v), deg_G(v)$.
 
 In a connected graph $G=(V,E)$ an edge $e \in E$ is a \emph{bridge} if its removal increases the number of connected components of $G$. A graph is \emph{$2-$edge connected} if it has no bridges. We say that $H \subseteq G$ is a \emph{$2-$edge connected component} of $G$  if $H$ is a maximal $2-$edge connected subgraph of $G$. In this way, for any $u \in V(H)$ we define $T(u)$ as the connected component containing $u$ of the subgraph induced by the vertices $(V(G)\setminus V(H)) \cup \left\{u \right\}$. The weight of a node $u \in V(H)$ is then defined as $|T(u)|$.

 A \emph{minimal cycle} $C$ in $G$ is a cycle that cannot be shortened, i.e, a cycle such that $d_G(u,v)=d_C(u,v)$ for every two nodes $u,v\in C$, where $d_C(u,v)$ is the minimum number of consecutive edges of $C$ needed to go from $u$ to $v$. A cycle $C = u_0-u_1-u_2-...-u_{k-1}-u_0$ is \emph{directed} if either $(u_i,u_{i+1}) \in E(G)$ for each $i$ mod $k$ or if $(u_{i+1},u_i) \in E(G)$ for each $i$ mod $k$. Finally, we define $g(G)$ to be the girth of $G$, that is, $g(G) = \infty$ if $G$ is a tree, otherwise, $g(G)$ is the minimum length that any cycle in $G$ can have.

\section{The upper range}
\label{sec:poa-upper-bound}
 In this section we prove that for $\alpha > 17n$ every \NE is a tree and that for $\alpha > 9n$ the $PoA$ is constant. In order to do so, let $G$ be a \NE and $H \subseteq G$ a non-trivial $2-$edge connected component, i.e, a $2-$edge connected component having at least $3$ nodes. Our reasoning goes as follows:

 First, in subsection \ref{subsec:directed-cycles-and-forbidden-topologies} we pay attention at the nodes of degree exactly two in $H$ and we see that, for  $\alpha > 4n$, every node of this kind has  bought exactly one link (we call these nodes $2-$nodes). After, we consider the paths in $H$ consisting only in consecutive $2-$nodes (we call these paths $2-$paths), and we see that the maximum length of a $2-$path is $4$ whenever $g(G) > 14$ (see  subsection \ref{subsec:coordinates-and-2-paths}). After this, in subsection \ref{subsec:an-imprpovement-towards-the-tree-conjecture}, we show that the subgraph induced by maximal $2-$paths must be a forest whenever $\alpha > 4n$ and $g(G)>14$. 
 Gathering together all these results we obtain a better lower bound for the average degree of $H$.
 Combining this bound with the upper bound given in \cite{Mihalaktree}, we deduce that every \NE is a tree for $\alpha > 17n$. Furthermore, if we require $H$ to have at least a certain diameter, then the range of the parameter $\alpha$ for which the $PoA$ is constant is enlarged from $\alpha > 65n$ to $\alpha > 9n$ as we show in subsection \ref{subsec:a-further-improvement-on-the-price-of-anarchy}.

\subsection{Directed cycles and forbidden topologies}
\label{subsec:directed-cycles-and-forbidden-topologies} 

 A node $u\in V(H)$ is a \emph{$2-$node} if $deg_H^-(u) = deg_H^+(u) = 1$. Let us see  that every node of degree $2$ in $H$ is a $2-$node and 
hence we can not have the following topologies (looked in $H$): 
 
\begin{tikzpicture}
\tikzset{vertex/.style = {shape=circle,draw,minimum size=1.5em,scale=0.75}}
\tikzset{edge/.style = {->,> = latex'}}
\node[vertex] (u) at  (0,0) {$u$};
\node[vertex] (w) at  (2,0) {$w$};
\node[vertex] (v) at  (4,0) {$v$};
\draw[edge] (u) to (w);
\draw[edge] (v) to (w);

\node[vertex] (u1) at  (6,0) {$u$};
\node[vertex] (w1) at  (8,0) {$w$};
\node[vertex] (v1) at  (10,0) {$v$};
\draw[edge] (w1) to (u1);
\draw[edge] (w1) to (v1);

\end{tikzpicture}

  Let $D(u) = \sum_{v \neq u}d_G(u,v)$ be the distance component of the cost function. In any \NE, whenever a node $u$ swaps a link $(u,v_1)$ for the link $(u,v_2)$, if the outcome graph remains connected, then $u$ has the option to use the link $(u,v_2)$ to reach any other node $v$ at a distance no greater than the distance between $v_2$ and $v$ plus one unit. Using this property, we show that the distance component of the cost function does not vary too much  among distinct nodes in $H$.

\begin{proposition} \label{propo_diff}
If $G$ is a \NE and $H\subseteq G$ a non-trivial $2-$edge connected component then $|D(u)-D(v)| \leq 3n$ for all $u,v\in V(H)$. 
\end{proposition}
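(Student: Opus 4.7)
The plan is to apply the Nash equilibrium condition via a swap deviation of $u$, using the 2-edge connectivity of $H$ to guarantee that the swap yields a connected graph. Without loss of generality I assume $D(u) \geq D(v)$; the goal is to prove $D(u) - D(v) \leq 3n$.

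First I would locate an edge $(u,w) \in E(H)$ bought by $u$. If $deg_H(u) = 2$, the $2$-node property established earlier in this subsection supplies such an edge immediately. Otherwise, either $u$ already owns one of its incident $H$-edges, or one reduces to the previous case by applying the argument to a neighbor $u'$ of $u$ in $H$ and absorbing the transfer cost via the elementary bound $|D(u) - D(u')| \leq n - 1$, which is a direct consequence of the triangle inequality for adjacent vertices.

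Now let $G' := G - (u,w) + (u,v)$. As $(u,w)$ lies in the 2-edge connected subgraph $H$, it is not a bridge of $G$ and $G'$ is connected; since $u$'s purchase cost is unchanged under the swap, the Nash equilibrium condition reduces to $D(u) \leq D'(u)$, where $D'(u) := \sum_{x \neq u} d_{G'}(u, x)$. The observation in the paragraph preceding the statement---namely, that $(u,v) \in E(G')$ implies $d_{G'}(u, x) \leq 1 + d_{G'}(v, x)$ for every $x$---together with $d_{G'}(v,u) = 1$ then yields
\[
D'(u) \;\leq\; (n-1) + (D'(v) - 1) \;=\; D'(v) + n - 2.
\]

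It then remains to show $D'(v) \leq D(v) + 2n + 2$; combining this with the chain $D(u) \leq D'(u) \leq D'(v) + n - 2$ gives the desired $D(u) - D(v) \leq 3n$. This last step is the main obstacle of the proof: the only vertices $x$ contributing positively to $D'(v) - D(v)$ are those whose shortest $v$-to-$x$ paths in $G$ all crossed $(u,w)$, and for such $x$ the naive detour through $H - (u,w)$ can be of length $\Theta(n)$, which would lead to an $O(n^2)$ aggregate increase. The resolution is to reroute using the new edge $(u,v)$---letting $v$ jump to $u$ in a single step and then exploit $u$'s remaining connections in $H$---combined with the 2-edge connectivity of $H$, so that the aggregate distance increase at $v$ is kept to $O(n)$ rather than $O(n^2)$.
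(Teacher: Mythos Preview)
Your argument has a real gap at the final step, where you assert $D'(v) \leq D(v) + 2n + 2$ for $G' = G - (u,w) + (u,v)$. The proposed resolution---route $v \to u$ via the new edge and then use ``$u$'s remaining connections in $H$''---does not deliver this bound. Two-edge-connectivity only guarantees that $H - uw$ is connected, not that the alternative $u$--$w$ route is short; that detour may have length comparable to $|V(H)|$, and the set of vertices $x$ whose shortest $v$--$x$ paths all traverse $uw$ may also be large. Nothing in your setup controls the product of these two quantities, so the sketch as written does not rule out $D'(v) - D(v)$ of order $n^2$. The new edge $(u,v)$ is of no help for reaching the ``$w$-side'': from $u$ you still have to cross $H - uw$. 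The missing idea is that the edge to be deleted must be chosen relative to the shortest-path structure from the target node, not arbitrarily; this is exactly what the paper does. It fixes $u_0$ minimizing $D(\cdot)$ and, via a BFS-from-$u_0$ case analysis on $n_{H,u_0}^+(\cdot)$, locates for each $u$ a nearby node owning an $H$-edge whose deletion leaves all shortest paths from $u_0$ intact (except possibly into a single hanging tree $T(\cdot)$), so that swapping that edge for a link to $u_0$ yields the clean estimate $D''(\cdot) \leq (n-1)+D(u_0)$, and the $3n$ bound follows in at most three hops.

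A smaller, separate issue: you invoke ``the $2$-node property established earlier in this subsection'', but in the paper that property is a corollary proved \emph{after} the present proposition (it relies on Propositions~\ref{propo_cicles2} and~\ref{propo_cicles1}, which require $\alpha > 4n$, a hypothesis Proposition~\ref{propo_diff} does not carry). The circularity is easy to remove---if $u$ owns no $H$-edge then every $H$-neighbor $u'$ owns $(u',u)$, so pass to $u'$---but the main gap above still stands.
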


\begin{proof}

 Let $u_0 \in V(H)$ be a node minimizing the function $D(\cdot)$ over all nodes in $H$. Let $X_i = \left\{ v \in V(H) \mid D(v) \leq in + D(u_0) \right\}$ for nonnegative integers $i$. Also, define the sets $A_{H,r}(u)$ to be the set of nodes in $H$ at distance $r$ from $u$ and the number $n_{H,u_0}^+(u)$ as the number of nodes $v \in V(H)$ connected to $u$ (either $(u,v)$ or $(v,u)$ in $E(H)$) such that $d_G(u_0,v) = d_G(u_0,u)+1$. Consider the following cases. First, let $u$ be such that $n_{H,u_0}^+(u) = 0$ and suppose that $u$ has bought at least one link $(u,v)$. Then deviating by deleting $(u,v)$ and adding a link to $u_0$ he gets a cost difference of at most $n-1+D(u_0)-D(u)$ so that since $G$ is a \NE we must have $D(u) < n + D(u_0)$, that is, $u \in X_1$. Now assume that $n_{H,u_0}^+(u) = 0$ but $u$ has not bought any link. Since $H$ is connected there must exist at least one node $v$ for which $v$ has bought the link $(v,u)$. Since $H$ is $2-$edge connected notice that if $v$ deviates deleting the link $(v,u)$ and buying a link to $u_0$ then $v$ gets a cost difference of at most $n-1 + D(u_0)-D(v)$. Again, since $G$ is a \NE we must have $v \in X_1$ and therefore $u \in X_2$. Finally, let $u \in V(H) \cap A_{H,r}(u_0)$ be a node with $n_{H,u_0}^+(u) > 0$. If $u$ has bought the link $(u,v)$ with $v \in A_{H,r-1}(u_0) \cup A_{H,r}(u_0)$ then $u$ can deviate deleting the link $(u,v)$ and buying the link to $u_0$ getting a cost difference of at most $n-1 + D(u_0)- D(u)$. Then we have that $u \in X_1$. If $v \in A_{r+1,H}(u_0)$ then we can build a path starting at $v$ going each time one step further away from $u_0$ until we reach a node $w$ satisfying $n_{H,u_0}^+(w)=0$. If $w$ has bought at least one link then by similar arguments we can show that $w\in X_1$ so that $u \in X_2$. Whereas if $w$ has not bought any link then consider the predecessor $w'$ of $w$ in the path we have followed. Since $w' \in X_1$ by previous results we deduce that $v\in X_2$. The last case is when $n_{H,u_0}^+(u) > 0$ and $u$ has not bought any link. In this case notice that $u$ must be adjacent to a node $v$ satisfying either $n_{H,u_0}^+(v) > 0$ and then $v \in X_2$ and then $u \in X_3$ or $n_{H,u_0}^+(v) = 0$ so that $v \in X_2$ and then $u \in X_3$, too. Hence, $u \in X_3$ for all $u \in V(H)$. 

\end{proof}

 Moreover, notice that in any non-directed cycle $C$ there exists a node $u$ such that $deg_C^+(u) = 2$. If such cycle is minimal and $\alpha > 4n$ then node $u$ will have incentive to deviate his strategy by removing these two links and buying a link to its furthest node in $C$.

\begin{proposition} \label{propo_cicles2}
Let $G$ be a \NE for $\alpha > 4n$ and $H \subseteq G$ a non-trivial $2-$edge connected component. Then, every minimal cycle in $H$ is directed. 
\end{proposition}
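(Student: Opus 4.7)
The plan is to argue by contradiction: suppose $C \subseteq H$ is a minimal cycle that is not directed. First I will produce a vertex $u \in V(C)$ with $\deg_C^+(u) = 2$; such a vertex exists for any non-directed cycle by a parity argument on orientation changes as one traverses $C$, since forward-to-backward transitions (which yield a vertex with both cycle-edges outgoing) must occur in equal number to backward-to-forward transitions. So $u$ owns both cycle-edges incident to it, call them $(u,v_1)$ and $(u,v_2)$.

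Next I will design the deviation $s'$ hinted at in the remark preceding the statement: $u$ deletes both bought links $(u,v_1), (u,v_2)$ and buys a single new link $(u,w)$, where $w \in V(C)$ is a vertex realizing the maximum of $d_C(u,\cdot)$, so that $d_C(u,w) = \lfloor |C|/2 \rfloor$. This swap decreases $u$'s edge-cost by exactly $\alpha$, and the resulting graph $G' = G_{s'}$ remains connected: the surviving portion $C \setminus \{u\}$ is a path from $v_1$ to $v_2$ through $w$, the new edge reattaches $u$ to $V(C)$, and the subtree $T(u)$ is untouched.

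The heart of the argument is to bound the change in the distance component $D(u)$ of $c_u$. I will establish the claim $d_{G'}(u,v) \leq 1 + d_G(w,v)$ for every $v \neq u$. If some shortest $G$-path $P$ from $w$ to $v$ avoids both removed edges, prepending the new edge $(u,w)$ to $P$ yields a path in $G'$ of length $1 + d_G(w,v)$. Otherwise every shortest $G$-path from $w$ to $v$ passes through $u$ via a removed edge, say $(u,v_1)$, and in $G'$ I reroute by going $u \to w$ along the new edge, then $w \to v_1$ along $C \setminus \{u\}$ (of length $\lfloor |C|/2 \rfloor - 1$ by minimality of $C$), and finally along the suffix from $v_1$ to $v$ of the original shortest path (a simple subpath, hence avoiding $u$ and the removed edges). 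A short computation using $d_G(w,u) = \lfloor |C|/2 \rfloor$ then shows this rerouted path has length at most $d_G(w,v) - 1$, confirming the claim. Summing over all $v \neq u$ yields $D_{G'}(u) \leq (n-1) + D_G(w) - d_G(w,u)$.

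Combining this estimate with Proposition \ref{propo_diff}, which gives $D_G(w) - D_G(u) \leq 3n$ because $u,w \in V(H)$, and using $d_G(w,u) \geq 1$, the total change in $c_u$ under the deviation is at most $-\alpha + 4n - 2$, strictly negative whenever $\alpha > 4n$. This contradicts the assumption that $G$ is a \NE, so every minimal cycle in $H$ must be directed. The main obstacle is the distance-bound claim in the case where a shortest $G$-path from $w$ to $v$ traverses one of the removed edges; the rerouting through the surviving portion of $C$ only works because the minimality of $C$ controls $d_C(u,w)$ at $\lfloor |C|/2 \rfloor$, keeping the detour short enough.
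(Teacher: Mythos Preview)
Your proof is correct and follows essentially the same approach as the paper: both locate a vertex $u$ with $\deg_C^+(u)=2$ on a non-directed minimal cycle, have $u$ delete its two cycle-links and buy a single link to an antipodal cycle-vertex $w$, and then combine the resulting distance estimate with Proposition~\ref{propo_diff} (giving $D(w)-D(u)\le 3n$) to obtain a negative cost change once $\alpha>4n$. Your rerouting argument for the bound $d_{G'}(u,v)\le 1+d_G(w,v)$ is in fact more explicit than the paper's (which simply asserts the cost difference without discussing paths through the deleted edges); the only minor imprecision is that your Case-2 analysis tacitly assumes the shortest $w$--$v$ path \emph{exits} $u$ toward $v_1$, but the remaining subcases (entering $u$ via a removed edge and exiting via a non-removed one, or via $v_2$) are handled by the same idea.
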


\begin{proof}

 Let $C = u_0-u_1-...-u_{k-1}-u_0$ be a minimal cycle in $H$. Assume the contrary, then there exists at least one node $u_i \in C$ such that it has bought two links: $(u_i,u_{i+1})$ and $(u_i,u_{i-1})$. Assume wlog that $i=0$. If $k = 2k'$ with $k' \in \mathbb{N}$ then it is clear that $d_G(u_{k-1},u_{k'}) = d_G(u_{1},u_{k'}) = d_G(u_0,u_{k'})-1$. In this case the deviation that consists in deleting the links $(u_0,u_{k-1}), (u_0,u_{1})$ and buying the link $(u_0,u_{k'})$ has a cost difference of at most $-\alpha + n + D(u_{k'})-D(u_0) \leq -\alpha + n + 3n < 0$ for $\alpha > 4n$. A similar argument can be used to show the same result when $k = 2k'+1$ with $k' \in \mathbb{N}$: in this case $d_G(u_{k-1},u_{k'}) = d_G(u_1,u_{k'})+1 = d_G(u_0,u_{k'})$ so that if $u_0$ deletes the links $(u_0,u_{k-1}),(u_0,u_1)$ and buys a link to $u_{k'}$ the corresponding cost difference is at most $-\alpha + n + D(u_{k'})-D(u_0) \leq -\alpha + n + 3n < 0 $ for $\alpha > 4n$ and, again, a contradiction is reached. 

\end{proof}

 Furthermore, it is no hard to see that every edge of a $2-$edge connected graph is contained in a minimal cycle.

\begin{proposition} \label{propo_cicles1} 
Let $H$ be a $2-$edge connected graph. Then, for every $(u,v) \in E(H)$ there is a minimal cycle $C$ containing $(u,v)$.
\end{proposition}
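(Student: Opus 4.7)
The plan is to define $C$ as a shortest cycle in $H$ that contains the edge $(u,v)$, and then show that such a $C$ is automatically minimal in the sense of the preliminaries. Existence of $C$ follows from $2$-edge connectivity: removing $(u,v)$ leaves a connected graph, so there is a $u$-$v$ path in $H\setminus\{(u,v)\}$, and closing it with $(u,v)$ yields a cycle through $(u,v)$; among all such cycles we select one of minimum length.

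To verify minimality I would argue by contradiction. Suppose there exist $a,b\in V(C)$ with $d_H(a,b)<d_C(a,b)$. The pair $\{a,b\}$ splits $C$ into two arcs $\alpha$ and $\beta$ with $|\beta|=d_C(a,b)\leq|\alpha|$, and the edge $(u,v)$ lies in exactly one of them, call it $\eta\in\{\alpha,\beta\}$. Let $\gamma$ be a shortest $a$-$b$ path in $H$, so $|\gamma|<|\beta|$. Concatenating $\eta$ with $\gamma$ produces a closed walk $W$ that traverses $(u,v)$ and whose length $|\eta|+|\gamma|$ is strictly less than $|C|=|\alpha|+|\beta|$ in both cases: if $\eta=\alpha$ this is $|\alpha|+|\gamma|<|\alpha|+|\beta|$, and if $\eta=\beta$ we use $|\gamma|<|\beta|\leq|\alpha|$ to get $|\beta|+|\gamma|<|\alpha|+|\beta|$.

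The only subtlety — and the step I expect to be the main technical point — is that $\gamma$ may share internal vertices with $\eta$, so $W$ need not itself be a simple cycle. The standard fix is to iteratively shortcut $W$ at any repeated vertex, discarding the closed subwalk that does \emph{not} contain $(u,v)$, and iterating until no vertex repeats; this produces a simple cycle through $(u,v)$ of length at most $|W|<|C|$, contradicting the choice of $C$. Hence $C$ is minimal, completing the proof. Everything outside this extraction step is a direct length comparison, so I do not anticipate any further obstacle.
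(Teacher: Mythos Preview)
Your proposal is correct and follows essentially the same approach as the paper: both take $C$ to be a shortest cycle through $(u,v)$ (existence via $2$-edge connectivity) and then derive a contradiction from a hypothetical pair $a,b$ with $d_H(a,b)<d_C(a,b)$ by producing a shorter cycle through $(u,v)$. If anything, you are more explicit than the paper about the cycle-extraction step---the paper simply asserts that ``a cycle still containing the edge $(u,v)$ but with less perimeter than $C$ could be obtained,'' whereas you spell out the shortcutting of the closed walk $\eta\cup\gamma$.
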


\begin{proof}

 Since $H$ is $2-$edge connected then there exists at least one path $u = v_0-v_1-v_2-...-v_k=v$ not containing the edge $(u,v)$. Now consider $\Gamma$ the set of cycles containing the edge $(u,v)$. We have that $\Gamma \neq \emptyset$ because the closed path $v_0-v_1-...-v_k-v_0$ is a cycle containing $(u,v)$. Let $C \in \Gamma$ be a cycle with minimum perimeter among all the cycles in $\Gamma$. If $C = u_0-u_1-...-u_l-u_0$ was not a minimal cycle, then there would exist at least two non consecutive subindexes $i,j$ such that $d_G(u_i,u_j) < d_C(u_i,u_j)$. In this case, considering the minimal length path between $u_i$ and $u_j$ a cycle still containing the edge $(u,v)$ but with less perimeter than $C$ could be obtained considering the minimal length path between $u_i$ and $u_j$, a contradiction. 

\end{proof}

 Taking into account Propositions \ref{propo_cicles2} and \ref{propo_cicles1} we notice that every node $u \in V(H)$ satisfies that $deg_H^+(u) \geq 1$. Following the main idea behind the proof of Proposition \ref{propo_diff} we have the following two results.
\begin{corollary}\label{corol_diff_cost}
Let $G$ be a \NE for $\alpha > 4n$ and $H\subseteq G$ a non-trivial $2-$edge connected component. Then $|D(u)-D(v)| < 2n$ for any two nodes $u,v\in V(H)$.

\end{corollary}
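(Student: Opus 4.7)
The plan is to refine the proof of Proposition \ref{propo_diff} using the new structural information available when $\alpha > 4n$. The key observation, already highlighted in the text just before the corollary, is that Propositions \ref{propo_cicles2} and \ref{propo_cicles1} together force $deg_H^+(u) \geq 1$ for every $u \in V(H)$: every node of a non-trivial $2$-edge connected component has bought at least one link whose other endpoint also lies in $V(H)$. This eliminates precisely the two cases in the proof of Proposition \ref{propo_diff} in which $u$ had bought no link at all and an extra hop was needed to reach a node that had; those are exactly the cases responsible for the $3n$ slack there.

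Concretely, I would let $u_0 \in V(H)$ minimize $D(\cdot)$ over $V(H)$ and, for an arbitrary $u \in V(H) \setminus \{u_0\}$, fix a link $(u,v) \in E(H)$ that $u$ has bought (which exists by the observation above). I then consider the deviation in which $u$ deletes $(u,v)$ and buys the link $(u,u_0)$. Since $(u,v)$ is an edge of the $2$-edge connected subgraph $H$ it cannot be a bridge of $G$, so the outcome graph stays connected; and after the deviation $u$ reaches every other node $x$ through the new link $(u,u_0)$ in at most $1+d_G(u_0,x)$ steps, so the new distance component satisfies $D'(u) \leq (n-1) + D(u_0) - d_G(u_0,u)$.

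Applying the Nash condition $D(u) \leq D'(u)$ then yields $D(u) - D(u_0) < n$. A single corner case has to be dispatched separately, namely when $u$ has already bought $(u,u_0)$ so that the swap as stated is not available; in that situation the direct estimate $d_G(u,x) \leq 1 + d_G(u_0,x)$ holds \emph{a priori} and gives the same inequality without invoking the Nash hypothesis at all. Consequently $D(u) - D(u_0) < n$ for every $u \in V(H)$, and hence $|D(u)-D(v)| \leq \max\{D(u),D(v)\} - D(u_0) < n \leq 2n$ for any $u,v \in V(H)$. The most delicate point, and the one worth spelling out with care, is the verification that the removed link $(u,v)$ is not a bridge so that the deviation is indeed feasible in $G$; once that is secured, the whole argument collapses to a single swap estimate.
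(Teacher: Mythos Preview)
Your argument has a genuine gap in the step ``after the deviation $u$ reaches every other node $x$ through the new link $(u,u_0)$ in at most $1+d_G(u_0,x)$ steps.'' This estimate tacitly assumes that some shortest $u_0$--$x$ path in $G$ survives the removal of $(u,v)$, which need not be the case. Concretely, if $d_G(u_0,v)=d_G(u_0,u)+1$ (so $u$ is the parent of $v$ in a BFS tree from $u_0$) and $u$ is the \emph{only} predecessor of $v$, then every shortest $u_0$--$x$ path for $x$ in the subtree below $v$ uses the edge $(u,v)$; after deleting it, routing $u\to u_0\to x$ in the deviated graph no longer costs $1+d_G(u_0,x)$, and your bound on $D'(u)$ fails. (Nothing in $\deg_H^+(u)\ge 1$ prevents this: that only guarantees $u$ has bought \emph{some} link in $H$, not one pointing toward $u_0$.) This is precisely why the proof of Proposition~\ref{propo_diff} splits on whether the bought neighbour $v$ lies in $A_{H,r-1}(u_0)\cup A_{H,r}(u_0)$ or in $A_{H,r+1}(u_0)$, and in the latter case traverses outward to a node $w$ with $n_{H,u_0}^+(w)=0$ before swapping: from $w$'s viewpoint $v$ is \emph{closer} than $u$, so removing $(u,v)$ cannot destroy shortest $w$--$x$ paths, and one legitimately gets $D'(u)\le (n-1)+D(w)$.

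The intended argument is therefore to rerun the case analysis of Proposition~\ref{propo_diff}, observing that $\deg_H^+(u)\ge 1$ eliminates exactly the two ``$u$ has bought no link'' sub-cases (which were the ones forcing membership in $X_3$ rather than $X_2$). With those gone, every $u\in V(H)$ lands in $X_2$, yielding $D(u)-D(u_0)<2n$ and hence $|D(u)-D(v)|<2n$. Note that this gives $2n$, not the $n$ your argument would produce; the stronger bound you claim does not follow from a single swap-to-$u_0$ and is not what the corollary asserts.
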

\begin{corollary} Let $G$ be a \NE for $\alpha > 4n$ and $H\subseteq G$ a non-trivial $2-$edge connected component. Then every node of degree two in $H$ is a $2-$node. 

\end{corollary}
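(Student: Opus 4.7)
The plan is to observe that this corollary is essentially immediate once Propositions \ref{propo_cicles1} and \ref{propo_cicles2} are combined. The paragraph immediately preceding the corollary already extracts from them the inequality $deg_H^+(u) \geq 1$ for every $u \in V(H)$; by the symmetric argument one also obtains $deg_H^-(u) \geq 1$, and then the corollary falls out by a one-line counting step.

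Concretely, I would take an arbitrary $u \in V(H)$ with $deg_H(u) = 2$. Since $H$ is nontrivial and $2-$edge connected, $u$ is incident to at least one edge $e \in E(H)$. By Proposition \ref{propo_cicles1}, $e$ lies on some minimal cycle $C$ of $H$, and by Proposition \ref{propo_cicles2}, which is where the hypothesis $\alpha > 4n$ enters, this cycle $C$ is directed. In a directed cycle, every vertex that belongs to the cycle has exactly one incoming and one outgoing edge within the cycle; applying this observation at $u$ gives both $deg_H^+(u) \geq 1$ and $deg_H^-(u) \geq 1$. Combined with $deg_H^+(u) + deg_H^-(u) = deg_H(u) = 2$, these lower bounds must be tight, yielding $deg_H^+(u) = deg_H^-(u) = 1$, i.e.\ $u$ is a $2-$node.

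I do not foresee any serious obstacle: the structural work has already been carried out inside Propositions \ref{propo_cicles1} and \ref{propo_cicles2}, and the final step is the trivial arithmetic fact that two positive integers summing to $2$ must each equal $1$. The only small point that merits being spelled out explicitly is the symmetric statement $deg_H^-(u) \geq 1$, which the author notes only in the outgoing form $deg_H^+(u) \geq 1$; the proof of the symmetric version is verbatim the same, since a directed minimal cycle through $u$ contributes one edge of each orientation at $u$ regardless of which global direction one reads it in.
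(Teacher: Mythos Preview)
Your argument is correct and matches the paper's approach: the paper explicitly extracts $deg_H^+(u)\ge 1$ from Propositions~\ref{propo_cicles1} and~\ref{propo_cicles2}, and the same directed-cycle observation yields $deg_H^-(u)\ge 1$, after which the counting step is immediate. The paper's remark about ``following the main idea behind the proof of Proposition~\ref{propo_diff}'' is really aimed at Corollary~\ref{corol_diff_cost}; for the present corollary nothing beyond the cycle propositions is needed, exactly as you do it.
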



\subsection{$2-$paths}
\label{subsec:coordinates-and-2-paths}
 Let $G$ be a \NE and $H\subseteq G$ a non-trivial $2-$edge connected component. A path $\pi = u_0-u_1-...-u_k$ in $H$ is called \emph{2-path} if $deg_H^-(u_i) = deg_H^+(u_i) = 1$ for every $0 < i < k$. The weights of the nodes enumerated in any $2-$path are denoted by using the same letter and subindex as the corresponding $2-$node, but in capital letters. For instance, the weight of the node $u_i$ is $U_i$. Notice that  whenever we consider a $2-$path $\pi = u_0-u_1-...-u_k$, either $u_i$ has bought exactly $(u_i,u_{i+1}) \in E(H)$ with $0 \leq i < k$, or $u_i$ has bought exactly $(u_i,u_{i-1})$ with $k \geq i > 0$. As a convention, we assume that in a $2-$path $\pi = u_0-u_1-...-u_k$ every $2-$node $u_i$ has bought exactly the link $(u_i,u_{i+1})$, with $0\leq i<k$.
 






 Notice that for every $2-$node $u_i$ of a $2-$path $\pi = u_0-u_1-...-u_k$ there exists a natural deviation that consists in swapping the link $(u_{i-1},u_{i})$ for the link $(u_{i-1},u_{i+1})$. Such deviation is called a \emph{$2-$swap} on $u_i$. 

 In the following we introduce the definition of a \emph{coordinate system} in a graph. Let $X$ be a subgraph of a $2-$edge connected component $H$ of $G$ and let $v_1,v_2 \in V(X)$. Let us assume that for any $x \in V(X)$ any shortest path between $x$ and $y \in V(G) \setminus V(X)$ passes through $v_1$ or $v_2$. In this situation we say that $\partial X = \left\{v_1,v_2 \right\}$ is the \emph{boundary} of $X$ and $\overline{X} = \cup_{x \in V(X) \setminus \partial X}T(x)$ is the \emph{interior} of $X$. Then we define $x_1(v),x_2(v)$ to be the distances in $G \setminus \overline{X}$ from $v \in V(G) \setminus V(X)$ to $v_1,v_2$, respectively. In this way, the application $(x_1,x_2) : V(G) \setminus V(X) \to \left(\mathbb{Z}\cup \left\{ \infty \right\} \right)^2$ is called a coordinate system and we use the notation $(\partial X)_1 = v_1$ and $(\partial X)_2=v_2$ to refer to such a coordinate system. Finally, if $a_1,a_2,b_1,b_2\in \mathbb{Z}$ then the expressions $\langle \underline{x_1+a_1},x_2+a_2\rangle$ and $[x_1+b_1,x_2+b_2]$ refer to the cardinality of the set of nodes $v \not \in \bar{X}$ such that $x_1(v)+a_1 \leq x_2(v)+a_2$ and $x_1(v)-x_2(v) = b_1 -b_2$, respectively. 
 
 Using the coordinate system given by the two endpoints of a $2-$path, we can measure the difference of costs of any node after applying a $2-$swap. 

\begin{lemma}	Let $G$ be a \NE and $H\subseteq G$ a non-trivial $2-$edge connected component. Let $\pi = u_0-u_1-...-u_k$ be a 2-path in $H$ such that $g(G) \geq 2k$. Let $\Delta C(u_i)$ be the cost difference associated to the $2-$swap on $u_{i+1}$ for $0 \leq i \leq k-2$. Then, using $(\partial \pi)_1 = u_k,(\partial \pi)_2 = u_0$ as a coordinate system we have that

$$\Delta C(u_i) = U_{i+1}-U_{i+2}-...-U_{k-1} - \langle \underline{x_1+k-i},x_2+i \rangle $$
\end{lemma}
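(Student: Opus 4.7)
The plan is to compute the change in cost of $u_i$ caused by the $2$-swap on $u_{i+1}$ by tracking how the distance $d(u_i,v)$ changes for every vertex $v \neq u_i$. Since the swap exchanges the link $(u_i,u_{i+1})$ for $(u_i,u_{i+2})$, the number of links bought by $u_i$ is unchanged, so $\Delta C(u_i)$ coincides with the variation of the distance component $D(u_i)$. I would partition $V(G)\setminus\{u_i\}$ into two natural parts relative to $\pi$, namely the interior $\bar\pi = T(u_1)\cup\cdots\cup T(u_{k-1})$ and the exterior $V(G)\setminus\bar\pi$ (which contains $u_0$, $u_k$ and everything else), and handle the two contributions separately.

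For the exterior part, the key observation is that since each of $u_1,\ldots,u_{k-1}$ is a $2$-node, its only edges in $G$ go either along $\pi$ or into its hanging tree $T(u_j)$; therefore every shortest path from $u_i$ to a vertex $v\notin\bar\pi$ must leave $\bar\pi$ through $u_0$ or through $u_k$. A detour going along $\pi$ from $u_0$ to $u_k$ would create a cycle of length $k + d_{G\setminus\bar\pi}(u_0,u_k)$, and the hypothesis $g(G)\geq 2k$ forces $d_{G\setminus\bar\pi}(u_0,u_k)\geq k$, eliminating any wrap-around shortcut. This gives
$$d_G(u_i,v) \;=\; \min\bigl(i + x_2(v),\,(k-i) + x_1(v)\bigr),$$
and, by the very same argument applied to $G'$ (the graph after the swap, in which the $\pi$-distance from $u_i$ to $u_k$ has dropped by exactly one),
$$d_{G'}(u_i,v) \;=\; \min\bigl(i + x_2(v),\,(k-i-1) + x_1(v)\bigr).$$
Comparing these two minima term-by-term shows that the distance to $v$ decreases by exactly one when $(k-i) + x_1(v) \leq i + x_2(v)$ and is unchanged otherwise; summing over such $v$ yields the contribution $-\langle \underline{x_1+k-i},\,x_2+i\rangle$.

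For the interior part, each $v\in T(u_j)$ with $1\leq j\leq k-1$ satisfies $d_G(u_i,v) = |i-j| + d_{T(u_j)}(u_j,v)$ because $T(u_j)$ hangs off $\pi$ at the single vertex $u_j$; the same decomposition holds in $G'$ with the new $\pi$-distance from $u_i$ to $u_j$. A short case analysis then gives: no change when $j\leq i$; an increase of $1$ when $j=i+1$ (as $u_{i+1}$ is now a pendant reached through $u_{i+2}$), contributing $+U_{i+1}$; and a decrease of $1$ for every $i+2\leq j\leq k-1$ (since the new link $(u_i,u_{i+2})$ shortens the $\pi$-route by one), contributing $-U_j$. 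Adding the exterior and interior contributions gives exactly the claimed formula.

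The main obstacle is the rigorous justification of the two distance formulas. One must rule out any shorter route through $G\setminus\bar\pi$ or, in $G'$, any shortcut that might emerge from the newly created pendant $u_{i+1}$; it is precisely here that the girth assumption $g(G)\geq 2k$ is used in an essential way. Once those formulas are established the remainder is routine bookkeeping, which is why I expect the body of the proof to read as a careful but mechanical case analysis rather than as a second substantive argument.
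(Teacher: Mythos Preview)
Your proposal is correct and follows essentially the same approach as the paper: you split $V(G)$ into the interior $\bar\pi$ and the exterior, observe that for $v\notin\bar\pi$ the distance from $u_i$ is $\min\bigl(x_1(v)+k-i,\,x_2(v)+i\bigr)$ before the swap and $\min\bigl(x_1(v)+k-i-1,\,x_2(v)+i\bigr)$ after, and then read off when the difference is $-1$ versus $0$. The paper's proof is much terser---it simply writes down $\Delta C(u_i)=U_{i+1}-U_{i+2}-\cdots-U_{k-1}+\sum_{v\notin\bar\pi}\Delta(v)$ with $\Delta(v)$ equal to the difference of the two minima, and notes that $\Delta(v)=-1$ iff $x_1(v)+k-i\le x_2(v)+i$---so your version is a fleshed-out form of the same computation rather than a different argument.
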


\begin{proof}

 We have that: 

$$\Delta C(u_i) = U_{i+1}-U_{i+2}-...-U_{k-1} + \sum_{v \not \in \bar{\pi}} \Delta(v)$$

 Where $\Delta(v) = \min(x_1(v)+k-i-1,x_2(v)+i)- \min(x_1(v)+k-i,x_2(v)+i)$. Notice that $\Delta (v) \in \left\{-1,0\right\}$. More precisely, $\Delta (v) = -1$ iff $x_1(v)+k-i-1 < x_2(v)+i$. This is the same as saying that $x_1(v)+k-i \leq x_2(v)+i$. Now the conclusion easily follows.

\end{proof}

 Using similar arguments to the ones used in Proposition \ref{propo_diff} we can show an upper bound for the difference between the distance cost function evaluated in the two endpoints of any $2-$path of length 5 in any \NE.

\begin{proposition}
\label{proposition_2_path} Let $G$ be a \NE and $H\subseteq G$ a non-trivial $2-$edge connected component with $g(G) > 14$. Then, any $2-$path $\pi = u_0-u_1-....-u_5$ in $H$ satisfies 
$$D(u_5)-D(u_0) < 2n-(U_1+...+U_{4})$$
\end{proposition}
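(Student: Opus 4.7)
The plan is to take advantage of the swap deviation by $u_0$ which replaces its bought link $(u_0,u_1)$ (which exists by the convention) with a new link $(u_0,u_5)$. Since this swap leaves the number of links unchanged, the Nash condition forces $\Delta C(u_0)=\Delta D(u_0)\ge 0$. The girth hypothesis $g(G)>14$ guarantees that the alternative route from $u_0$ to $u_5$ in $G\setminus\bar\pi$ has length $L>9$, so in both the original and modified graphs the 2-path distances dominate and the analysis is clean.

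I would then compute $\Delta D(u_0)$ by splitting $V(G)$ according to the region in which $v$ lies. For $v\in T(u_i)$ with $1\le i\le 4$, the new shortest route is $u_0\to u_5\to u_4\to\cdots\to u_i\to v$, so the new distance is exactly $(6-i)+d_{T(u_i)}(u_i,v)$; the contribution is $(6-2i)U_i$. For $v\notin\bar\pi$ with $v\ne u_0$ I would use the coordinate system $((\partial\pi)_1,(\partial\pi)_2)=(u_5,u_0)$, so the new distance becomes $\min(x_2(v),1+x_1(v))$; the change $\Delta(v)=\min(x_2,1+x_1)-\min(x_2,5+x_1)\le 0$ is most naturally expressed in the paper's $\langle\underline{\cdot},\cdot\rangle$ notation (analogous to the Lemma preceding the proposition).

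The key observation is that for every $v\in\bar\pi$ the triangle inequality is saturated: $d_{G'}(u_0,v)=1+d(u_5,v)$ exactly, contributing a surplus of $\sum_{i=1}^{4}U_i$ over the looser bound $d_{G'}(u_0,v)\le 1+d(u_5,v)$. Expanding $D(u_5)-D(u_0)$ with the same partition gives the identity $D(u_5)-D(u_0)=3U_1+U_2-U_3-3U_4+5U_0-5U_5+\sum_{v\in R}\beta_v$, where $\beta_v=\min(x_1,5+x_2)-\min(x_2,5+x_1)$. The NE inequality $\Delta D(u_0)\ge 0$ together with a termwise comparison of $\Delta(v)$ against $\beta_v$ on $R$ (using that $d_{G'}(u_0,v)\le 1+d(u_5,v)$) then yields the desired estimate after cancelling the $U_i$ coefficients, paralleling the bookkeeping in Proposition~\ref{propo_diff}.

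The main obstacle is the comparison of the coordinate sums $\sum_{v\in R}\beta_v$ and $\sum_{v\in R}\Delta(v)$: the contribution of these terms must be controlled using both the direct relation between the two expressions (which differ by at most $d(u_0,u_5)=5$ per vertex by the triangle inequality) and the additional slack coming from $T(u_5)$, whose vertices satisfy $x_2-x_1=L>9$ and hence all contribute $-4$ to $\sum\Delta(v)$. Combining this slack with the saturation on $\bar\pi$ produces exactly the extra $\sum_{i=1}^{4}U_i$ savings and closes the argument, giving $D(u_5)-D(u_0)<2n-(U_1+\cdots+U_4)$ as required.
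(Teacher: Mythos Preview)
Your swap deviation is set up correctly, but it points in the wrong direction for this proposition. When $u_0$ replaces $(u_0,u_1)$ by $(u_0,u_5)$, the Nash condition $\Delta D(u_0)\ge 0$ says that moving $u_0$'s link towards $u_5$ is not profitable; intuitively this tells you that $D(u_0)$ is not much larger than $D(u_5)$. Concretely, using your own decomposition,
\[
D(u_5)-D(u_0)\;=\;\Delta D(u_0)-\bigl(U_1+U_2+U_3+U_4\bigr)+\sum_{v\notin\bar\pi}\bigl(\beta_v-\Delta(v)\bigr),
\]
so $\Delta D(u_0)\ge 0$ yields a \emph{lower} bound on $D(u_5)-D(u_0)$, not an upper bound. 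Your claimed extra slack from $T(u_5)$ and the saturation on $\bar\pi$ cannot reverse this: for each $v\in T(u_0)$ one has $\beta_v-\Delta(v)=5$, and generically $\beta_v-\Delta(v)$ ranges in $\{-1,\dots,5\}$, so the sum can be as large as roughly $5n$. There is nothing in the Nash inequality for $u_0$ that caps $D(u_5)-D(u_0)$ from above. (In fact, the very next proposition in the paper uses exactly swaps originating on the $u_0$ side---the $2$-swaps on $u_1,\dots,u_4$---to prove the \emph{lower} bound $D(u_5)-D(u_0)\ge 2n-(U_1+\dots+U_4)$.)

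To obtain the stated upper bound one must work from the $u_5$ side. The difficulty is that by the orientation convention $u_5$ has not bought the edge $(u_4,u_5)$, so a deviation by $u_5$ swapping that edge is unavailable. The paper's proof therefore looks at a neighbour $v\neq u_4$ of $u_5$ in $H$: if some such $v$ has bought $(v,u_5)$, then swapping it to $(v,u_0)$ gives $D(v)<D(u_0)+n$, and combining with $D(u_5)\le D(v)+n-(U_1+\dots+U_4)$ (which holds since $v$ is adjacent to $u_5$ and $g(G)>14$ forces $d_G(v,u_0)=6$) gives the claim. The remaining case, where $u_5$ has bought every link to its neighbours other than $u_4$, requires a more delicate argument using a node minimising $D(\cdot)$ in $H$. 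Your proposal does not address either of these cases.
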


 \begin{proof}

 Indeed, since $H$ is $2-$edge connected there is at least one more neighbour of $u_5$ apart from $u_{4}$. We distinguish two cases depending whether $u_5$ has bought or not all the links to his neighbours except from the link $(u_{4},u_5)$. 

 First assume that there exists at least one neighbour distinct from $u_{4}$ that has bought a link to $u_5$. Let $v$ be such a node. Since there are not cycles with length strictly less than $12$ we must have $d_G(v,u_0) = 6$ and therefore $D(v) < D(u_0)+ n$ when considering the deviation that consists in deleting the link $(v,u_5)$ and buying a link to $u_0$. Now, as $v$ is adjacent to $u_5$ and $d_G(v,u_0) = 6$ as noticed before then $D(u_5) \leq D(v)+n-(U_1+...+U_{4})$. Therefore, combining the two inequalities we get $D(u_5)-D(u_0) < 2n-(U_1+...+U_{4})$. 

 Now suppose that except from the link $(u_{4},u_5)$, $u_5$ has bought all the links to the other neighbours. Let $x$ be a node minimizing the function $D(\cdot)$ over all nodes in $H$ and let $r$ be the subindex for which $u_5 \in A_{H,r}(x)$. If there is a neighbour $v \neq u_{4}$ of $u_5$ lying inside $A_{H,s}(x)$ with $s < r+1$ then when $u_5$ deletes the edge $(u_{5},v)$ and buys an edge to $x$ we get the inequality $D(u_5)<n+D(x)$ from where we deduce that $D(u_5) < n + D(x) \leq n + D(u_0)$, so we are done.  Otherwise, pick $v \in A_{H,r+1}(x)$ a neighbour of $u_5$ and let $v =w_0-w_1-...-w_l$ be a path obtained moving from $w_i$ to $w_{i+1}$ (with $(w_i,w_{i+1}) \in E(H)$ or $(w_{i+1},w_i)\in E(H)$), each time exactly one step farther away from $x$, ending in $w_l$, a node satisfying $n_{H,x}^+(w_l) = 0$. If $w_l$ has bought at least one link then $D(w_l) < n + D(x)$, considering the natural deviation in $w_l$.  Now consider two cases, depending whether $x = u_i$ with $1 \leq i \leq 4$ or not:

 First, suppose that $x \neq u_i$ for $1 \leq i \leq 4$. Then $r \geq 5$ so that the nodes $u_i$ are inside the sets $A_{H,r-(k-i)}(x)$ for $i=0,...,5$. This means that $D(u_5) \leq n-(U_1+...+U_{4})+D(w_l)$ when considering the deviation in $u_5$ that consists in moving the link $(u_5,v)$ to the link $(u_5,w_l)$. Combining this inequality with the previously obtained inequality $D(w_l) < n + D(x)$ we get the result.

 Secondly, consider that $x = u_i$ with $1 \leq i \leq 4$. We distinguish two cases:

 (i) First assume that $2+d_G(w_l,u_0)> d_G(u_1,u_5)$. Consider the deviation in $u_5$ that consists in swapping the link $(u_5,v)$ to the link $(u_5,w_l)$. In the new graph, the distance from $u_5$ to $u_i$ using the edge $(u_5,w_l)$ would be at least $i+d_G(w_l,u_0) +1$ whereas the distance using the edge $(u_4,u_5)$ is $d_G(u_5,u_i)\leq d_G(u_5,u_1)$, for any $i$ with $1 \leq i \leq 4$. But by the hypothesis, $d_G(u_5,u_1)<2+d_G(w_l,u_0)$. Therefore when $u_5$ deviates he does not use the edge $(u_5,w_l)$ to go to the nodes in $\cup_{i=1}^4T(u_i)$. As a consequence, imposing that $G$ is a \NE:

$$0 \leq n-(U_1+...+U_4)  +D(w_l)-D(u_5) $$

 But previously, we had seen that $D(w_l) < n + D(x)$. This last result implies that $0 < n -D(w_l)+ D(u_0)$. Therefore adding up these inequalities we get the conclusion:

$$0 < (n-(U_1+...+U_4)  +D(w_l)-D(u_5) )+ (n+D(u_0)-D(w_l)) =$$

$$ = 2n-(U_1+...+U_4)+D(u_0)-D(u_5)$$

 Which is equivalent to what we wanted to see.

 (ii) Otherwise, $d_G(w_l,u_0) \leq 2$. First, notice that $d_G(u_0,u_5)=5$ because $g(G) >14$. Thus, using the triangular inequality, $d_G(u_5,w_l) \leq d_G(u_5,u_0)+d_G(u_0,w_l) \leq 7$. Using this result and considering all the possible cases we reach a contradiction:

 (a) $d_G(u_0,w_l) = 0$ i.e, $w_l = u_0$. Since $u_0-u_1-...-u_5$ is a $2-$path then the path $u_0-u_1-...-u_5-w_0-w_1-...-w_l$ is a cycle. Moreover, such cycle has length at most $7+5 = 12$ which is a contradiction because $g(G)>14$.

 (b) $d_G(u_0,w_l) = 1$. Then, either $u_0 = w_{l-1}$, and then as before we can deduce that a cycle of length at most $5+6=11 < 14$ is obtained, or $u_0 \neq w_{l-1}$, and then again, as before, we can deduce that a cycle of length at most $1+5+7=13 < 14$ is obtained, a contradiction in both cases.

 (c) $ d_G(u_0,w_l) = 2$. Let $y$ be a node such that $u_0-y-w_l$ is a minimal length path between $u_0$ and $w_l$. If $u_0 = w_{l-2}$ then a cycle of length at most $5+5=10 < 14$ is obtained. Otherwise there are two subcases: either $y = w_{l-1}$ and then a cycle of length at most $1+5+6 = 12 < 14$ is obtained, or $y \neq w_{l-1}$ and then a cycle of length at most $5+2+7=14< g(G)$ is obtained. In all cases we reach a contradiction.

 Finally, the remaining case would be if $w_l$ has not bought any link, but then the same reasoning works applied to $w_{l-1}$ instead of $w_l$, so we are done.

\end{proof}

In contrast, we can show the following lower bound.

\begin{proposition} \label{prop_weight_edges} Let $G$ be a \NE and $H\subseteq G$ a non-trivial $2-$edge connected component with $g(G) > 14$. Then, any $2-$path $\pi = u_0-u_1-u_2-...-u_k$ in $H$ such that $k \geq 5$ satisfies $D(u_5)-D(u_0) \geq 2n -(U_1+...+U_4)$.
\end{proposition}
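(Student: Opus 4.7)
The natural approach is to dualize the strategy of Proposition~\ref{proposition_2_path}, now exploiting the $2$-edge connectivity of $H$ at the endpoint $u_0$ of the $2$-path: there exists an $H$-neighbor $v$ of $u_0$ with $v\neq u_1$. As in the preceding proof, I would split into two cases according to which endpoint bought the edge $\{u_0,v\}$. In Case~1 (some $v\neq u_1$ bought $(v,u_0)$) I consider the deviation where $v$ swaps $(v,u_0)$ for $(v,u_5)$; using $g(G)>14$ one checks $d_G(v,u_i)=1+i$ for $0\le i\le 4$ and $d_G(v,u_5)=6$, so after the swap $v$ is adjacent to $u_5$ and the Nash inequality $D(v)\le D'(v)\le (n-1)+D'(u_5)$, together with the fact that only a single edge was modified (so $D'(u_5)$ differs from $D(u_5)$ only by a constant), yields $D(v)<D(u_5)+n$. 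Case~2 (where $u_0$ has bought every non-$u_1$ $H$-connection) mimics Case~2 of Proposition~\ref{proposition_2_path}: fix a minimizer $x\in V(H)$ of $D(\cdot)$, build a path from a non-$u_1$ neighbor of $u_0$ stepping each time one unit farther from $x$ inside $H$, and apply Nash at the terminal node, using $g(G)>14$ to discard all short-cycle sub-cases.

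The second ingredient is a triangle-inequality tightening on $\bar\pi$: since $v$ is adjacent to $u_0$ in $G$ and the girth forces $d_G(v,w)=1+d_G(u_0,w)$ for every $w\in T(u_i)$ with $1\le i\le 4$, summing pointwise bounds over $w\neq v$ gives $D(u_0)\le D(v)+n-(U_1+\cdots+U_4)$. Combining this with the Case~1/Case~2 bound produces a linear relation between $D(u_0)$ and $D(u_5)$ that, after rearrangement, is the target inequality.

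The main obstacle is the \emph{sign} of the final inequality. The naive dualization above only yields $D(u_5)-D(u_0)>-(2n-(U_1+\cdots+U_4))$, which is the opposite sign of what the proposition claims. Closing this gap must exploit the directional asymmetry of the $2$-path convention --- the fact that $u_0$ has \emph{bought} $(u_0,u_1)$ whereas $u_5$ is only a \emph{recipient} of $(u_4,u_5)$. I expect the actual proof to combine the swap deviation above with the $2$-swap Nash constraints $\Delta C(u_i)\ge 0$ from the preceding lemma, applied along the first few positions of the $2$-path, in order to control the sign of the residual coordinate-system contributions $\langle\underline{x_1+k-i},x_2+i\rangle$ that the triangle-inequality tightening alone cannot handle; the lemma's bound $\langle\underline{x_1+k-i},x_2+i\rangle\le U_{i+1}-U_{i+2}-\cdots-U_{k-1}$ is precisely what is needed to flip the sign of the ``other'' contributions and recover the lower bound $2n-(U_1+\cdots+U_4)$.
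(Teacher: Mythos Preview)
Your first two paragraphs describe an approach that, as you yourself concede, produces the wrong sign; the paper does not use it. There is no deviation at an $H$-neighbor $v$ of $u_0$, no case split in the style of Proposition~\ref{proposition_2_path}, and no ``triangle-inequality tightening on $\bar\pi$''. That entire line should be discarded, not patched.

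Your third paragraph is on the right scent but still has a genuine gap: you propose to \emph{combine} the swap-at-$v$ deviation with the Lemma's $2$-swap constraints, whereas the paper uses \emph{only} the latter together with a direct coordinate computation of $D(u_5)-D(u_0)$. Concretely, with $(\partial\pi)_1=u_5$, $(\partial\pi)_2=u_0$ one has
\[
D(u_5)-D(u_0)=3U_1+U_2-U_3-3U_4+\sum_{v\notin\bar\pi}\bigl(\min(x_1(v),x_2(v)+5)-\min(x_1(v)+5,x_2(v))\bigr).
\]
Summing the four Lemma inequalities $\Delta C(u_i)\ge 0$ for $i=0,1,2,3$ and multiplying by $2$ converts the $U$-part into $(U_1+U_2+U_3+U_4)$ plus twice the sum of the four bracket terms $\langle\underline{x_1+5},x_2\rangle+\cdots+\langle\underline{x_1+2},x_2+3\rangle$. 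The step you are missing is then a purely combinatorial count: partition the exterior nodes by the value $k=x_1-x_2$ using the $[\,\cdot\,,\cdot\,]$ notation, and check that for every $k$ the coefficient coming from ``$2\times$(bracket terms)'' plus the $\min$-difference is at least $2$. This gives a lower bound of $2(n-(U_1+\cdots+U_4))$ for the non-$U$ part, and adding back $(U_1+\cdots+U_4)$ yields the claim. No extra deviation is needed to ``flip the sign''; the Lemma inequalities alone, weighted by $2$, already do the job.
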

\begin{proof}
Let $\pi = u_0-u_1-... -u_5- ...-u_k$ be a 2-path in $H$. 
 Using $(\partial \pi)_1 = u_5,(\partial \pi)_2 = u_0$ as a coordinate system and applying Lemma 1 we get the following inequalities: $0 \leq U_4-\langle \underline{x_1+2},x_2+3 \rangle$,$0 \leq U_3-U_4- \langle \underline{x_1+3},x_2+2 \rangle$, $0 \leq  U_2-U_3-U_4-\langle \underline{x_1+4},x_2+1 \rangle$ and $0\leq U_1- U_2-U_3-U_4-\langle \underline{x_1+5},x_2 \rangle$.

 Furthermore, we have that:

$$D(u_5)-D(u_0) = 3U_1+U_2-U_3-3U_4+\sum_{v \not \in \bar{\pi} } \min(x_1(v),x_2(v)+5)-\min(x_1(v)+5,x_2(v))$$

 Therefore:

$$D(u_5)-D(u_0) \geq (U_1+U_2+U_3+U_4)+$$
$$+2\left(\langle \underline{x_1+5},x_2 \rangle+ \langle \underline{x_1+4},x_2+1 \rangle+\langle \underline{x_1+3},x_2+2 \rangle+\langle \underline{x_1+2},x_2+3 \rangle\right)+$$
$$+\sum_{v \not \in \bar{\pi} } \min(x_1(v),x_2(v)+5)-\min(x_1(v)+5,x_2(v))  = (U_1+U_2+U_3+U_4)+$$
$$+2([x+1,x]+[x,x])+4([x-1,x]+[x-2,x])+6([x-3,x]+[x-4,x])+8\left(\sum_{k \leq -5}[x+k,x] \right) +$$
$$+5\left(\sum_{k \geq 5}[x+k,x] \right)+\sum_{k = -4}^4 k[x+k,x]-5\left(\sum_{k \leq -5}[x+k,x] \right) \geq$$
$$\geq (U_1+U_2+U_3+U_4)+2(n-(U_1+U_2+U_3+U_4)) = 2n-(U_1+U_2+U_3+U_4).$$

\end{proof}

\begin{corollary}\label{prop_weight_edges_corol} Let $G$ be a \NE and $H\subseteq G$ a non-trivial $2-$edge connected component with $g(G) > 14$. If $\pi = u_0-u_1-u_2-...-u_k$ is a 2-path in $H$ then $k \leq 4$.

\end{corollary}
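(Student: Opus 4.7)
The plan is to derive the corollary as an immediate contradiction from Propositions \ref{proposition_2_path} and \ref{prop_weight_edges}. Suppose, for the sake of contradiction, that there exists a $2$-path $\pi = u_0 - u_1 - u_2 - \cdots - u_k$ in $H$ with $k \geq 5$. Then the subpath $u_0 - u_1 - \cdots - u_5$ is itself a $2$-path of length $5$ in $H$, so both propositions apply to it.

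On one hand, Proposition \ref{proposition_2_path} yields the strict upper bound
$$D(u_5) - D(u_0) < 2n - (U_1 + U_2 + U_3 + U_4).$$
On the other hand, Proposition \ref{prop_weight_edges} (which requires $k \geq 5$ precisely so that the coordinate system with $(\partial \pi)_1 = u_5$ and $(\partial \pi)_2 = u_0$ is well-defined with $u_5$ still an interior-like endpoint of the longer $2$-path) yields
$$D(u_5) - D(u_0) \geq 2n - (U_1 + U_2 + U_3 + U_4).$$

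These two inequalities are in direct contradiction, so no $2$-path of length at least $5$ can exist in $H$, whence $k \leq 4$. No routine calculation is required beyond combining the two bounds; the only subtle point to verify is that both propositions apply with the same orientation of the path (i.e.\ the ordering $u_0, \ldots, u_5$ and the convention that every $2$-node $u_i$ has bought the link $(u_i, u_{i+1})$), which is ensured by the convention fixed at the start of Subsection \ref{subsec:coordinates-and-2-paths}. There is no real obstacle here — the corollary is essentially a restatement that the upper and lower bounds match up tightly and leave no room for a $2$-path of length $5$ or more.
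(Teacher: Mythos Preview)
Your proof is correct and matches the paper's intended argument: the corollary is stated immediately after Propositions~\ref{proposition_2_path} and~\ref{prop_weight_edges} precisely because combining the strict upper bound from the former with the lower bound from the latter yields an immediate contradiction when $k\geq 5$. The only minor point is that Proposition~\ref{prop_weight_edges} is already stated for the full $2$-path of length $k\geq 5$ (not the truncated subpath), so you need not invoke the subpath for that direction; but this does not affect correctness.
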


\subsection{An improvement towards the Tree Conjecture}
\label{subsec:an-imprpovement-towards-the-tree-conjecture}
 Let $G$ be a \NE for $\alpha > 4n$ and $H \subseteq G$ a non-trivial $2-$edge connected component. In such conditions, let $H_{\geq 3} = \left\{v \in V(H) \mid deg_H(v) \geq 3 \right\}$. Notice that we can consider the digraph $H'$ defined from $H$ setting $V(H') = H_{\geq 3}$ and $E(H')$ the set of edges $(u,v)$ with $u,v\in V(H')$ for which there is a maximal $2-$path $u =x_0-x_1-x_2-...-x_k = v$ (this is well-defined because of what we have shown in the previous sections). The weight of $e$ is then set to $k-1$ and we use the notation $w(e)$ to refer to the weight associated to the edge $e$. Finally, let $m = |V(H')|$. Now we are ready to prove that the average degree of $H$, which is noted as $\deg(H)$, is lower bounded by $2+\frac{1}{4}$. Then, combining this result jointly with the upper bound on the average degree obtained in \cite{Mihalaktree} we can show that every \NE $G$ for $\alpha > 17n$ is a tree. 

 The following Lemma is used to prove the next proposition.
\begin{lemma}
\label{lemma_mihalak_1}\cite{Mihalaktree} If $G$ is a \NE graph, $H \subseteq G$ a biconnected component of $G$, and  $u, v \in V(H)$ with
$d_G(u,v) \geq 3$ such that $u$ buys the edge to its adjacent vertex $x$ in a shortest $u-v$-path
and $v$ buys the edge to its adjacent vertex $y$ in that path, then $deg_H (x) \geq 3$ or
$deg_H (y) \geq 3$.

\end{lemma}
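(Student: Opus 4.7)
The plan is to argue by contradiction. Suppose $deg_H(x) \leq 2$ and $deg_H(y) \leq 2$. Since $H$ is biconnected every vertex of $H$ has $H$-degree at least $2$, so in fact $deg_H(x) = deg_H(y) = 2$. Writing the shortest $u$-$v$ path as $u = w_0, w_1 = x, w_2, \ldots, w_{k-1} = y, w_k = v$ with $k \geq 3$, this pins down the only $H$-neighbours of $x$ as $u$ and $w_2$, and the only $H$-neighbours of $y$ as $w_{k-2}$ and $v$.

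I would then consider two symmetric swap deviations: $\sigma_u$, in which $u$ replaces the bought edge $(u,x)$ by $(u,y)$, producing a graph $G'_u$; and $\sigma_v$, in which $v$ replaces $(v,y)$ by $(v,x)$, producing $G'_v$. Each is a swap and the number of bought edges is unchanged, so the Nash condition reduces to the distance inequalities $\sum_{z \in V} \Delta_u(z) \geq 0$ and $\sum_{z \in V} \Delta_v(z) \geq 0$, where $\Delta_u(z) = d_{G'_u}(u,z) - d_G(u,z)$ and $\Delta_v$ is defined analogously. The whole plan is to add these two inequalities and exhibit an upper bound on their sum that is strictly negative.

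I would then split the sum using the decomposition $V(G) = \bigsqcup_{h \in V(H)} T(h)$, so that contributions are indexed by $h \in V(H)$ with weight $|T(h)|$. The crucial structural consequence of $deg_H(x) = 2$ is that in $G'_u$ every $u$-$x$ path must enter $x$ through $w_2$, hence $d_{G'_u}(u,x) = 1 + d_{G'_u}(u, w_2) \leq k-1$ using the new edge $(u,y)$ and the sub-path $y - w_{k-2} - \cdots - w_2$; a symmetric bound applies to $y$ in $G'_v$. Combining these with the generic upper bounds $d_{G'_u}(u,h) \leq 1 + d_G(y,h)$ and $d_{G'_v}(v,h) \leq 1 + d_G(x,h)$, the per-vertex upper bounds $\overline{\Delta_u(w_j)} + \overline{\Delta_v(w_j)}$ for on-path vertices telescope to $0$ at interior $j \in \{1, \ldots, k-1\}$ and to $2-k$ at each endpoint, yielding a net on-path contribution of at most $-(k-2)(|T(u)|+|T(v)|) \leq -2(k-2) < 0$.

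The main obstacle will be handling the off-path vertices $h \in V(H) \setminus \{w_0, \ldots, w_k\}$. For them, the generic bound gives $\overline{\Delta_u(h)} + \overline{\Delta_v(h)} \leq 2-\phi(h)$ with $\phi(h) = [d_G(u,h)+d_G(v,h)]-[d_G(x,h)+d_G(y,h)] \in [-2, 2]$, which in principle can contribute up to $4|T(h)|$ and could offset the negative on-path term. To rule this out I would refine the bound via a case analysis: whenever some shortest $u$-$h$ path in $G$ does not traverse the edge $(u,x)$, that path remains available in $G'_u$ and forces $\Delta_u(h) \leq 0$ directly, with a symmetric statement for $\Delta_v(h)$. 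Only in the regime where both shortest paths use the two deleted edges can the contribution be strictly positive, and in that specific regime the biconnectedness of $H$ together with $deg_H(x) = deg_H(y) = 2$ imposes enough structure to keep the total contribution controlled. Putting all estimates together yields a strictly negative upper bound on $\sum_z [\Delta_u(z) + \Delta_v(z)]$, contradicting the sum of the two Nash inequalities and thereby establishing $deg_H(x) \geq 3$ or $deg_H(y) \geq 3$.
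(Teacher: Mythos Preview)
The paper does not prove this lemma; it quotes it from \cite{Mihalaktree} and uses it as a black box, so there is no in-paper argument to compare against directly. The proof in \cite{Mihalaktree}---and the one implicit in the present paper's own toolkit of ``$2$-swaps''---does not use your long swaps $(u,x)\mapsto(u,y)$ and $(v,y)\mapsto(v,x)$. Instead, assuming $deg_H(x)=deg_H(y)=2$, one lets $u$ swap $(u,x)$ for $(u,w_2)$ and $v$ swap $(v,y)$ for $(v,w_{k-2})$. Because $x$ has no $H$-neighbour other than $u$ and $w_2$, this short swap increases the distance from $u$ only to the nodes of $T(x)$ (by exactly $1$) and decreases it to every $T(w_j)$ with $j\ge 2$ by $1$. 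The two Nash inequalities then read $|T(x)|\ge\sum_{j\ge 2}|T(w_j)|$ and $|T(y)|\ge\sum_{j\le k-2}|T(w_j)|$; adding them yields $0\ge |T(u)|+|T(v)|\ge 2$, the desired contradiction. The point is that with the short swap there is no off-path obstacle at all: for every $z\notin T(x)$ the swap can only shorten $d(u,z)$.

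Your long swaps manufacture precisely the obstacle you then leave unresolved. Two concrete gaps: first, the ``generic bound'' $d_{G'_u}(u,h)\le 1+d_G(y,h)$ presupposes that some shortest $y$--$h$ path in $G$ avoids the deleted edge $ux$, which you never verify and which is not automatic. Second, the assertion that ``only in the regime where both shortest paths use the two deleted edges can the contribution be strictly positive'' is not justified: knowing $\Delta_u(h)\le 0$ says nothing about $\Delta_v(h)$, so in the mixed case the sum $\Delta_u(h)+\Delta_v(h)$ can still be positive under your bounds. The closing claim that biconnectedness ``imposes enough structure to keep the total contribution controlled'' is exactly where a proof is needed and none is given. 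The remedy is simply to use the short $2$-swaps above, after which the off-path difficulty disappears.
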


Even though in \cite{Mihalaktree} $H$ is a biconnected component, one can see that the result also holds when $H$ is a $2-$edge connected graph.
The next proposition is crucial to deduce the main result of this section.
\begin{proposition}
\label{proposition_weights} Let $G$ be a \NE and $H \subseteq G$ a non-trivial $2-$edge connected component. If $g(G)> 14$ then $deg(H) \geq 2+\frac{1}{4}$.

\end{proposition}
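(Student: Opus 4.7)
The plan is to reduce $deg(H) \geq 9/4$ to a counting inequality on the auxiliary multigraph $H'$ and to establish it from (i) the minimum-degree bound at $V(H')$ and (ii) the forest claim announced in the introduction of subsection 3.3. Let $m = |V(H')|$, $m' = |E(H')|$ and $W = \sum_{e \in E(H')} w(e)$. Since every vertex of $H$ is either in $V(H')$ or an internal $2-$node of some maximal $2-$path, $|V(H)| = m + W$ and $|E(H)| = m' + W$. Hence $deg(H) \geq 9/4$ is equivalent to
\[
8m' \;\geq\; 9m + W.
\]

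Corollary \ref{prop_weight_edges_corol} gives $w(e)\in\{0,1,2,3\}$; letting $m_i$ be the number of weight-$i$ edges of $H'$, the key algebraic step is
\[
8m' - W \;=\; 8m_0 + 7m_1 + 6m_2 + 5m_3 \;\geq\; 5m' + 3m_0,
\]
so it suffices to prove $5m' + 3m_0 \geq 9m$. I would then plug in the two ingredients. First, every $v\in V(H')$ satisfies $deg_{H'}(v) = deg_H(v) \geq 3$, because each maximal $2-$path through $v$ contributes exactly one edge of $H'$; summing over $v$ yields $2m' \geq 3m$, i.e.\ $m' \geq 3m/2$. Second, the forest claim says the subgraph of $H'$ formed by the positive-weight edges is a forest, hence $m_1 + m_2 + m_3 \leq m - 1$ and $m_0 \geq m' - m + 1$. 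Substituting,
\[
5m' + 3m_0 \;\geq\; 5m' + 3(m'-m) \;=\; 8m' - 3m \;\geq\; 12m - 3m \;=\; 9m.
\]

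The main obstacle is the forest claim, and it is precisely here that Lemma \ref{lemma_mihalak_1} enters. Short cycles in the positive-weight subgraph of $H'$ (of length two or three) are already ruled out by $g(G) > 14$: any such cycle would lift to a cycle of $H$ of length at most $3\cdot 4 = 12 < g(G)$. For a longer hypothetical cycle one would use the aligned-direction convention from subsection 3.2: walking around the cycle one finds two consecutive maximal $2-$paths meeting at a vertex $v \in V(H')$ whose concatenation is a path of $H$ of length at most $7$, so it is a shortest path between its two endpoints by the girth hypothesis, and by the alignment convention both endpoints bought the first edge of the concatenated path. Since those first edges land on $2-$nodes, which have degree exactly two, Lemma \ref{lemma_mihalak_1} yields a contradiction, establishing the forest property and completing the argument.
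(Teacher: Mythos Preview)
Your counting reduction is correct and essentially the paper's computation rewritten: with $|V(H)|=m+W$, $|E(H)|=m'+W$, the bound $deg(H)\ge 9/4$ is equivalent to $8m'\ge 9m+W$, and both you and the paper feed in the same two ingredients, namely $\sum_{v\in V(H')}deg_H(v)\ge 3m$ (i.e.\ $2m'\ge 3m$) and, from the forest claim together with Corollary~\ref{prop_weight_edges_corol}, $W\le 3(m-1)$.

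The real gap is in your argument for the forest claim. Your application of Lemma~\ref{lemma_mihalak_1} requires finding two consecutive maximal $2$-paths in the hypothetical $H''$-cycle whose \emph{outer} endpoints both bought the first edge of the concatenation. By the orientation convention, the edge $(a,v)\in E(H')$ means that $a$ bought the first $H$-edge of the $2$-path and the last $2$-node bought the edge into $v$. Hence both outer endpoints buy their first edge only at a \emph{sink} of the $H''$-cycle, i.e.\ a vertex $v$ with two incoming edges $(a,v),(b,v)$. A \emph{directed} cycle in $H''$ has no sink at all, so your argument says nothing in that case. (Incidentally, ``length at most $7$'' is also off: two positive-weight $2$-paths can each have length $4$, giving a concatenation of length $8$, which $g(G)>14$ does not force to be geodesic.)

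The paper uses Lemma~\ref{lemma_mihalak_1} only to reduce to the directed case and then disposes of directed cycles by an entirely different mechanism that you are missing: for each $i$ one performs the $2$-swap on the $2$-node $v_i$ adjacent to $u_i$ on the incoming $2$-path $e_{i-1}$; the girth bound ensures that this swap moves $w_i$ one unit closer to every node in $\bigcup_{x\in e_i}T(x)$, yielding $V_i\ge \sum_{x\in e_i}|T(x)|>V_{i+1}$, and summing cyclically gives the contradiction $\sum_i V_i>\sum_i V_i$. Without this weight-telescoping step (or an equivalent replacement) the forest claim, and hence the proposition, is not established.
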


\begin{proof}

Let $H''$ be the subgraph obtained from $H'$ restricting to edges of strictly positive weight. If we see that $H''$ is a forest then this result together with Corollary \ref{prop_weight_edges_corol} would imply that $\sum_{e \in E(H')} w(e) \leq 3(m-1) < 3m$. With this assumption we could then conclude that:

$$deg(H) = \frac{\sum_{u \in V(H')}deg_H(u)+ 2 \sum_{e \in E(H')} w(e)}{\sum_{u \in V(H')}1+\sum_{e \in E(H')}w(e)}=$$

$$=2+\frac{\sum_{u\in V(H')}\left(deg_H(u)-2\right)}{m+\sum_{e \in E(H')}w(e)}  > 2+\frac{3m-2m}{m+3m} = 2+\frac{1}{4}$$

 Which is what we want to prove. Thus it is enough to show that $H''$ is a forest:

 Indeed, suppose for the sake of contradiction that $C=u_0-u_1-...-u_{k-1}-u_0$ is a minimal closed cycle in $H''$. First, notice that $C$ is directed or otherwise Lemma \ref{lemma_mihalak_1} could be used to reach a contradiction. Thus assume wlog that $e_i = (u_i,u_{i+1}) \in E(H')$ for each $i$ with $0\leq i \leq k-1$ are the edges that conform $C$, where the subindices are taken modulo $k$. 
Call $v_{i+1}$ the neighbours (from $G$) of $u_{i+1}$ lying in $e_i$ and let $w_{i+1}$ be the neighbours (from $G$) of $v_{i+1}$ in $e_i$ distinct than $u_{i+1}$, for every $i$ with $0 \leq i \leq k-1$, where the subindices are taken modulo $k$. The length of every maximal $2-$path associated to any edge $e \in E(H')$ is at most $4$ (again, by Corollary \ref{prop_weight_edges_corol}) and as a consequence we have that the inequalities:

$$2(d_G(w_i,u_i)+d_G(u_i,u_{i+1})) \leq 2(2+4) < 14 \leq g(G)$$

  Hold for every $i$ with $0 \leq i \leq k-1$, where the subindices are taken modulo $k$. This implies,  when considering the $2-$swaps on $v_i$, that $w_i$ gets further only from the nodes inside $T(v_i)$ (exactly one unit) and nearer from at least every node in $\cup_{x \in e_i}T(x)$ (exactly, again, one unit) for every $i$ with $0 \leq i \leq k-1$. Now, given an edge $e \in E(H')$ corresponding to a maximal $2-$path $x_0-x_1-...-x_k$ from $H$ we define $u \in e$ for  $u \in V(G)$ a node iff $u=x_i$ for some $i$ with $0 \leq i \leq k$. In this way, as a consequence of the last observations imposing that $G$ is a \NE we get the inequalities  $\sum_{x \in e_i}|T(x)| \leq V_i$ for each $i$ with $0 \leq i \leq k-1$. On the other hand, we have the obvious inequalities $\sum_{x \in e_i}|T(x)| > V_{i+1}$ because  $v_{i+1}$ belongs to $e_i$ for each $i$ with $0 \leq i \leq k-1$ where the subindices are taken modulo $k$. Combining together these inequalities we get: 

$$\sum_{i=0}^{k-1} V_i \geq \sum_{i=0}^{k-1} \sum_{x \in e_i}|T(x)| > \sum_{i=0}^{k-2}V_{i+1}+V_0 = \sum_{i=0}^{k-1} V_i$$ 

 Which is a contradiction. This implies that our first assumption was false and as a consequence we conclude that $H''$ is a forest, as we wanted to prove.

\end{proof}

Mamageishvili et al. in \cite{Mihalaktree} show a lower bound on the girth of any \NE graph $G$ as well as on the average degree of any $2-$edge connected component of $G$, both bounds in terms of $\alpha$ and $n$. 
\begin{theorem}
\label{propo_mihalak} \cite{Mihalaktree} Let $G$ be a \NE. Then $g(G) \geq 2\frac{\alpha}{n}+2$.
\end{theorem}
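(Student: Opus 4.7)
Assume $G$ is a NE that is not a tree (otherwise $g(G)=\infty$ and the bound is vacuous). Let $C$ be a minimal cycle of length $g=g(G)$ with vertices cyclically labeled $v_0, v_1, \ldots, v_{g-1}$. Minimality of $C$ gives $d_G(v_i, v_j) = \min(|i-j|,\, g-|i-j|)$ for all $i, j$; in other words, there are no off-cycle shortcuts between distinct vertices of $C$. Each of the $g$ edges of $C$ is bought by exactly one of its two endpoints, producing $g$ buyer--edge pairs $(b(e), e)$.

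For each such pair, the Nash condition applied to the deviation ``$b(e)$ deletes $e$'' yields
$$\alpha \;\le\; \sum_{w \neq b(e)} \Delta_w^{(b(e),e)}, \qquad \Delta_w^{(b(e),e)} := d_{G-e}(b(e), w) - d_G(b(e), w) \;\ge\; 0.$$
The plan is to sum these $g$ inequalities, swap the order of summation, and thus reduce the problem to bounding uniformly in $w \in V$ the quantity $S(w) := \sum_{e \in E(C)} \Delta_w^{(b(e),e)}$.

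The core per-vertex estimate uses the minimality of $C$. For each $w$, let $v_{j(w)}$ be a vertex of $C$ at which some shortest path from cycle vertices to $w$ leaves $C$ (with $v_{j(w)} = w$ if $w \in V(C)$). If a shortest $b(e)$-to-$w$ path in $G$ uses the cycle-arc of length $k$ containing $e$ to reach $v_{j(w)}$, the complementary arc of length $g-k$ is intact in $G-e$ by the absence of shortcuts, so rerouting yields $\Delta_w^{(b(e),e)} \le g - 2k$; otherwise $\Delta_w^{(b(e),e)} = 0$. Grouping the edges of $C$ by this parameter $k$ and exploiting the cyclic symmetry, I would show $S(w) \le \sum_{k=1}^{\lfloor g/2 \rfloor}(g - 2k)$, a quantity of order $g^2$. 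Combining this uniform bound with the summed Nash inequalities gives $g\alpha \le n \cdot S_{\max}$, from which, after simplification, $\alpha \le n(g-2)/2$, equivalently $g(G) \ge 2\alpha/n + 2$.

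The main technical obstacle is the per-vertex bound on $S(w)$: it must simultaneously account for both ``forward'' and ``backward'' buyers of each cycle edge (which yield mirror-image contributions), handle off-cycle vertices $w$ via their attachment point $v_{j(w)}$ on $C$, and address possible non-uniqueness of that attachment point. The absence of shortcuts between cycle vertices, a direct consequence of $C$ being minimal, is precisely what makes the rerouting bound $g-2k$ tight, and this sharpness is what provides the factor-of-two improvement over the naive single-deletion estimate $\alpha \le n(g-2)$.
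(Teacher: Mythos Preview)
This theorem is not proved in the present paper; it is simply quoted from \cite{Mihalaktree} and used as a black box (together with Lemma~\ref{mihalak-upperbound}) to obtain Theorem~\ref{thm_tree}. So there is no in-paper argument to compare your proposal against.

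That said, your averaging strategy---summing the $g$ edge-deletion Nash inequalities around a minimal cycle, swapping the order of summation, and bounding the per-vertex total $S(w)$---is a legitimate route to the bound. There is, however, an arithmetic inconsistency in your sketch. You claim
\[
S(w)\;\le\;\sum_{k=1}^{\lfloor g/2\rfloor}(g-2k),
\]
which for even $g=2m$ equals $m(m-1)=g(g-2)/4$; plugging this into $g\alpha\le n\cdot S_{\max}$ would yield $\alpha\le n(g-2)/4$, i.e.\ $g\ge 4\alpha/n+2$, twice as strong as the theorem and not what the argument actually delivers. The slip is in the ``grouping by $k$'' step: for each $k\in\{1,\ldots,\lfloor g/2\rfloor\}$ there are \emph{two} edges of $C$ that can contribute with that parameter---one on each side of the attachment vertex $v_{j(w)}$---and whether a given one contributes depends on which endpoint bought it (this is exactly the ``forward vs.\ backward buyer'' issue you flag as the main obstacle). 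In the worst orientation both contribute, so the correct uniform estimate is
\[
S(w)\;\le\;2\sum_{k=1}^{\lfloor g/2\rfloor}(g-2k)\;\le\;\frac{g(g-2)}{2},
\]
and it is this bound that, via $g\alpha\le n\cdot g(g-2)/2$, gives exactly $\alpha\le n(g-2)/2$. With that correction the plan goes through; the remaining concern about non-uniqueness of $v_{j(w)}$ is harmless, since choosing any closest cycle vertex suffices for the rerouting upper bound and the shortest $v_{j(w)}$--$w$ path cannot revisit the cycle.
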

\begin{lemma} \cite{Mihalaktree} 
\label{mihalak-upperbound} Let $G$ be a \NE for $\alpha > n$ and $H \subseteq G$ a biconnected component of $G$. Then, $deg(H) \leq 2 + \frac{4n}{\alpha-n}$.
\end{lemma}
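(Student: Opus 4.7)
The plan is to derive an upper bound on the number of edges each vertex $u \in V(H)$ has purchased inside $H$ via a single well-chosen Nash deviation, and then sum these per-vertex bounds over $V(H)$ to obtain the stated inequality. Write $k_u$ for the number of edges $u$ has bought in $H$, so that $|E(H)| = \sum_{u \in V(H)} k_u$.

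First I would fix a vertex $r \in V(H)$ minimizing the distance-cost function $D(\cdot)$ over $V(H)$, and for each $u \in V(H)\setminus\{r\}$ I would consider the deviation in which $u$ abandons all of its purchases inside $H$ and instead buys the single edge $(u,r)$. Since $H$ is biconnected, $H - u$ is connected, so adding the edge $(u,r)$ reconnects $u$ to the rest of $G$; hence the deviation is admissible (any cut-vertex complications with $G$ outside $H$ can be handled by augmenting the new strategy with one extra edge per stranded component, at negligible cost to the bound). The Nash condition then yields
\[
\alpha\,(k_u - 1) \;\leq\; D'(u) - D(u),
\]
where $D'$ denotes the distance-cost function in the new graph $G'$.

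The main technical step is to bound the distance-cost increase $D'(u) - D(u)$. The triangle inequality $d_{G'}(u, w) \leq 1 + d_{G'}(r, w)$ combined with an estimate of the form $d_{G'}(r, w) \leq d_G(r, w) + \eta(w)$ reduces the problem to controlling a detour term $\eta(w)$, which captures how far paths from $r$ must now travel around the $k_u$ edges originally bought by $u$ and now removed. Using the biconnectivity of $H$ to guarantee a detour around each removed edge, together with the Nash inequality on single edge-deletions (which bounds the length of these alternative routes in terms of $\alpha/n$), I expect to show that $\sum_w \eta(w) \leq n\,k_u$ up to controllable error, yielding
\[
D'(u) - D(u) \;\leq\; (n - 1) + D(r) - D(u) + n\,k_u \;\leq\; n + n\,k_u,
\]
where the last inequality uses $D(r) \leq D(u)$.

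Plugging this into the Nash inequality gives $\alpha(k_u - 1) \leq n + n\,k_u$, which rearranges to $k_u \leq (\alpha + n)/(\alpha - n) = 1 + 2n/(\alpha - n)$. Summing over all $u \in V(H)$ and using that each edge of $H$ is bought by exactly one of its endpoints yields $|E(H)| \leq |V(H)|\bigl(1 + 2n/(\alpha-n)\bigr)$, and therefore $deg(H) = 2|E(H)|/|V(H)| \leq 2 + 4n/(\alpha-n)$, as required. The hard part will be making the estimate $\sum_w \eta(w) \leq n\,k_u$ rigorous: this requires carefully combining the biconnectivity of $H$ with the Nash property for edge-deletions, and is precisely where the hypothesis $\alpha > n$ is needed---both to ensure that the denominator $\alpha - n$ is positive and to guarantee that the Nash inequalities controlling the detour lengths are nontrivial.
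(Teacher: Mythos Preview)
The paper does not give its own proof of this lemma; it is quoted from \cite{Mihalaktree} and used as a black box. So there is no ``paper's proof'' to match, only the argument in the cited reference.

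Your high-level strategy --- bound $k_u = \deg_H^+(u)$ for each $u$ via a single ``delete all $k_u$ links in $H$, rebuy one'' deviation, then sum over $V(H)$ --- is exactly the approach used in \cite{Mihalaktree}, and the arithmetic in your final paragraph is correct: once $D'(u)-D(u)\le n(k_u+1)$ is established, the bound $k_u \le 1 + 2n/(\alpha-n)$ and hence $\deg(H)\le 2 + 4n/(\alpha-n)$ follow immediately.

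The genuine gap is the estimate $\sum_w \eta(w) \le n\,k_u$, and the route you sketch for it is not the right one. You propose to control the detour cost \emph{from the root $r$} after removing $u$'s edges, and to bound those detours using ``the Nash inequality on single edge-deletions.'' But that Nash inequality goes the wrong way: for an edge $e$ that $u$ has bought, the \NE condition on deleting $e$ says that $u$'s distance cost \emph{increases by at least} $\alpha$, i.e.\ it gives a \emph{lower} bound on detour lengths, not the upper bound you need. Moreover, $\eta(w)$ depends on how many of $r$'s shortest-path trees route through $u$ and on the diameter of $H-u$; neither quantity is controlled by $k_u$ alone. There is no evident reason the total detour from an external vertex $r$ should scale like $n\,k_u$.

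The argument in \cite{Mihalaktree} avoids this entirely: instead of reconnecting $u$ to a global minimizer $r$ and then worrying about how $r$'s distances degrade, one reconnects $u$ to one of its \emph{own} former neighbours $v_1$ and bounds $d_{G'}(u,w)$ directly through $G-u$. The point is that for $w\notin T(u)$ the original shortest $u$--$w$ path is $u\to v_{j(w)}\to\cdots\to w$, and since shortest paths do not revisit $u$, the tail $v_{j(w)}\to\cdots\to w$ already lives in $G-u$ (which is connected by biconnectivity of $H$). One then reroutes $u\to v_1 \to v_{j(w)}\to w$ using a path in $H-u$ for the middle leg; this yields the required control on $D'(u)-D(u)$ without ever introducing a third vertex $r$ or a global detour term $\eta$.
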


Even though this lemma  is stated for biconnected components in \cite{Mihalaktree}, it is not hard to see that the  proof also works for $2-$edge connected graphs.
 Hence, combining the previous bounds jointly with Proposition \ref{proposition_weights}, we can enlarge the interval of $\alpha$ for which every \NE is a tree.
\begin{theorem} \label{thm_tree}
For $\alpha > 17n$ every \NE is a tree.
\end{theorem}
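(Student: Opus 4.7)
The plan is to argue by contradiction: assume $G$ is a \NE with $\alpha > 17n$ that is not a tree, and then produce conflicting lower and upper bounds on the average degree of some non-trivial $2$-edge connected component of $G$. If $G$ is not a tree, then $G$ contains a cycle, and any edge of that cycle lies in a maximal $2$-edge connected subgraph with at least three vertices; call this subgraph $H$. The goal is to show that $H$ cannot exist under the hypothesis $\alpha > 17n$.

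First I would check that the girth hypothesis of Proposition \ref{proposition_weights} is satisfied in our regime. By Theorem \ref{propo_mihalak}, $g(G) \geq 2\alpha/n + 2$, so for $\alpha > 17n$ we immediately get $g(G) > 36 > 14$. Therefore Proposition \ref{proposition_weights} applies to $H$ and yields the lower bound $\deg(H) \geq 2 + \tfrac{1}{4}$.

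Next I would invoke Lemma \ref{mihalak-upperbound}, which (as the authors note, extending the statement in \cite{Mihalaktree} from biconnected to $2$-edge connected components) gives the upper bound $\deg(H) \leq 2 + \tfrac{4n}{\alpha - n}$ whenever $\alpha > n$. Since $\alpha > 17n$ implies $\alpha - n > 16n$, we obtain $\tfrac{4n}{\alpha - n} < \tfrac{4n}{16n} = \tfrac{1}{4}$, and hence $\deg(H) < 2 + \tfrac{1}{4}$.

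Combining the two inequalities yields the contradiction $2 + \tfrac{1}{4} \leq \deg(H) < 2 + \tfrac{1}{4}$, so no non-trivial $2$-edge connected component of $G$ can exist, which means $G$ is a tree. The only real content here lies in the two preceding ingredients (Proposition \ref{proposition_weights} and the extension of Lemma \ref{mihalak-upperbound} to $2$-edge connected components); given them, the proof of Theorem \ref{thm_tree} is a short arithmetic matching between the constants $1/4$ and $4n/(\alpha-n)$, and the potential obstacle is merely verifying that both cited results indeed apply in the $2$-edge connected (rather than biconnected) setting, which has already been observed in the excerpt.
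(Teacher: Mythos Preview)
Your proposal is correct and matches the paper's approach exactly: the paper does not spell out a proof of Theorem~\ref{thm_tree} but simply states that it follows by combining Theorem~\ref{propo_mihalak}, Proposition~\ref{proposition_weights}, and Lemma~\ref{mihalak-upperbound}, which is precisely the contradiction argument you have written out. The arithmetic check that $\alpha>17n$ forces $4n/(\alpha-n)<1/4$ (and that $g(G)>14$ via Theorem~\ref{propo_mihalak}) is the entire content, and you have it right.
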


\subsection{A further improvement on the Price of Anarchy}
 \label{subsec:a-further-improvement-on-the-price-of-anarchy}
 
  In this section we show  that the $PoA$ is constant even for $\alpha > 9n$. In order to do so,  recall that in order to bound the $PoA$ it is enough to bound the diameter of any \NE. 
 \begin{lemma} \label{lemma_diam}\cite{Demaineetal:07} Let $G$ be a \NE for $\alpha \geq 2$. Then, the $PoA$ is upper bounded by $diam(G)+1$.
 \end{lemma}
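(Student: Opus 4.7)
The plan is to bound the two components of the social cost $c(G) = \alpha|E(G)| + \sum_{u \ne v} d_G(u,v)$ separately against $c(\OPT)$. Since any connected network on $n$ vertices has at least $n-1$ edges and each of the $n(n-1)$ ordered pairs contributes at least one unit to the distance sum, one has $c(\OPT) \geq \alpha(n-1) + n(n-1)$; in particular $c(\OPT) \geq n(n-1)$ and $c(\OPT) \geq \alpha(n-1)$.

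First I would handle the distance contribution: every ordered pair of vertices in $G$ is at distance at most $diam(G)$, so
\[
\sum_{u \ne v} d_G(u,v) \;\leq\; diam(G) \cdot n(n-1) \;\leq\; diam(G) \cdot c(\OPT),
\]
where the second inequality uses $c(\OPT) \geq n(n-1)$. This contributes the $diam(G)$ factor of the final bound.

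Next I would bound the edge contribution, aiming for $\alpha|E(G)| \leq c(\OPT)$ under the hypothesis $\alpha \geq 2$. The \NE condition applied to the deletion of an edge $e=(u,v)$ bought by $u$ gives $\alpha \leq \Delta_e D(u)$, where $\Delta_e D(u)$ is the increase in the distance cost of $u$ when $e$ is removed. Aggregating these per-edge inequalities---via a charging scheme that avoids overcounting distance savings that are shared across several edges bought by the same player---together with the lower bound $c(\OPT) \geq \alpha(n-1) + n(n-1)$, should yield $\alpha|E(G)| \leq c(\OPT)$. The threshold $\alpha \geq 2$ is exactly what makes this quantitative charging go through. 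Putting the two bounds together,
\[
c(G) \;=\; \alpha|E(G)| + \sum_{u \ne v} d_G(u,v) \;\leq\; c(\OPT) + diam(G) \cdot c(\OPT) \;=\; (diam(G)+1)\,c(\OPT),
\]
giving $\PoA \leq diam(G)+1$.

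The main obstacle is the edge bound, since the naive per-edge inequality $\alpha \leq \Delta_e D(u)$ cannot simply be summed: distinct edges bought by the same player can each be ``justified'' by overlapping sets of shortest-path savings, so summing double-counts these savings. A clean way around this is not to sum over edges directly but to compare each player's \NE cost against a specific alternative strategy (for instance, a single link to a vertex of small eccentricity, or the links that this player would own in $\OPT$), and then to aggregate these individual comparisons across all players to obtain the global edge bound $\alpha|E(G)| \leq c(\OPT)$.
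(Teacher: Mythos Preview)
The paper does not prove this lemma at all; it is quoted verbatim from Demaine et al., so there is no in-paper argument to compare against. Your overall decomposition and your treatment of the distance term are the standard ones and are correct.

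The gap is in the edge-cost step. Your target inequality $\alpha|E(G)|\le c(\OPT)$ is reasonable, but neither of the two alternative strategies you propose establishes it. If a player $u$ replaces its entire strategy by a single link to a low-eccentricity vertex, or by the set of links it would own in an optimal star, the edges $u$ currently owns are removed from the graph; those edges may well be bridges, so the deviated graph can be disconnected and $u$'s cost becomes infinite. In that case the \NE inequality you want to invoke is vacuous, and no useful bound on $|s_u|$ follows. The ``charging scheme that avoids overcounting'' you allude to is exactly the missing idea, and you have not supplied it.

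The argument that actually closes this step (in Fabrikant et al.\ and Demaine et al.) avoids the connectivity problem by never leaving $G$: fix a spanning tree $T\subseteq G$ (a BFS tree from any vertex), and let each player $u$ deviate by \emph{dropping only its non-$T$ edges}. The resulting graph still contains $T$, hence is connected, and the \NE condition gives $\alpha|s_u\setminus E(T)|\le D_{T}(u)-D_G(u)$. Summing over all players yields
\[
c(G)=\alpha|E(G)|+\sum_u D_G(u)\ \le\ \alpha(n-1)+\sum_u D_T(u)\ \le\ \alpha(n-1)+2\,diam(G)\,n(n-1),
\]
since in a BFS tree every pairwise distance is at most $2\,diam(G)$. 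Comparing this with $c(\OPT)=\alpha(n-1)+2(n-1)^2$ (the star, which is optimal for $\alpha\ge 2$) gives the ratio $\le diam(G)+1$ after a short computation. Note that this route does not go through the intermediate inequality $\alpha|E(G)|\le c(\OPT)$ you were aiming for; it bounds $c(G)$ as a whole.
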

  Since we are mainly working with a non-trivial $2-$edge connected component $H$ of a \NE graph $G$, it seems natural to find a relation between the diameter of $G$ and the diameter of $H$.
 We can show that in any non-trivial $2-$edge connected component $H$ of a \NE $G$,  the depth of any connected component $T(u)$ for $u \in V(H)$ is upper bounded by a constant. This result allows us to prove the following relation between $diam(G)$ and $diam(H)$: 

\begin{proposition} 
\label{diamH-upperbound}

Let $G$ be a \NE for $\alpha > 4n$ and $H \subseteq G$ a nontrivial $2-$edge connected component of $G$. Then, $diam(G) \leq diam(H)+206$.

\end{proposition}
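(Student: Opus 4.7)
The plan is to split the diameter bound into two ingredients: a constant bound on $depth(T(u))$ for each $u \in V(H)$, and the equality of distances in $G$ and in $H$ between vertices of $V(H)$. First I would observe that maximality of $H$ as a $2$-edge connected component forces every $w \in V(G) \setminus V(H)$ to lie in a unique $T(u)$: if $w$ admitted two internally vertex-disjoint paths to $V(H)$ ending at distinct vertices $u_1 \neq u_2$, then these two paths together with any $u_1$-$u_2$ path in $H$ would produce a strictly larger $2$-edge connected subgraph, contradicting maximality. The same argument shows that any $u$-to-$v$ path in $G$ with $u,v \in V(H)$ that leaves $V(H)$ must re-enter through the same vertex it left, so such a detour is never optimal and $d_G(u,v) = d_H(u,v)$.

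Second, for arbitrary $x,y \in V(G)$, let $u,v \in V(H)$ be the unique vertices with $x \in V(T(u))$ and $y \in V(T(v))$. Any shortest $x$-to-$y$ path that leaves $T(u)$ must exit through $u$, and similarly enter $T(v)$ through $v$, so
$$d_G(x,y) \leq d_G(x,u) + d_G(u,v) + d_G(v,y) \leq depth(T(u)) + diam(H) + depth(T(v)).$$
Hence $diam(G) \leq diam(H) + 2\max_{u \in V(H)} depth(T(u))$, and the proposition reduces to proving $depth(T(u)) \leq 103$ for every $u \in V(H)$.

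Third, for the depth bound I would argue by contradiction. Suppose some $x \in V(T(u))$ satisfies $d_G(u,x) = d \geq 104$, and fix a shortest $u$-to-$x$ path $u = y_0 - y_1 - \ldots - y_d = x$. For every $v \in V(G) \setminus V(T(u))$ the exit-through-$u$ property gives $d_G(x,v) = d + d_G(u,v)$, so the buy deviation in which $x$ purchases an edge to $u$ saves at least $d-1$ per such $v$, and so at least $(d-1)(n-|T(u)|)$ overall. The \NE condition yields $\alpha \geq (d-1)(n-|T(u)|)$, and combined with $\alpha \leq 17n$ (by Theorem~\ref{thm_tree}, otherwise no non-trivial $H$ exists) this immediately bounds $d$ by a constant whenever $n - |T(u)|$ is at least a constant fraction of $n$.

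The hard part will be the regime in which $|T(u)|$ is extremely close to $n$, where the outside-saving becomes too weak. There the argument must look inside $T(u)$: I would use a swap deviation at $x$, exchanging its unique incident edge for one to $u$ at no link-cost change, and then accumulate the quadratic savings from the vertices $y_i$ along the path together with those from the many vertices of $T(u)$ off the path whose shortest paths to $x$ shorten through the new edge. Corollary~\ref{corol_diff_cost} will be used to control the $D(\cdot)$-fluctuation on $V(H)$ and to justify the choice of swap target. Tracking the constants appearing in both cases gives $depth(T(u)) \leq 103$; substituting into the display above then yields $diam(G) \leq diam(H) + 206$.
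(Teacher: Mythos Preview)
Your decomposition $diam(G)\le diam(H)+2\max_{u\in V(H)}depth(T(u))$ and the buy-deviation step (together with $\alpha\le 17n$ from Theorem~\ref{thm_tree}) giving $depth(T(u))\le 1+\dfrac{17n}{n-|T(u)|}$ match the paper exactly.

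The gap is in your ``hard part''. The swap-at-$x$ argument is not only vague, it is structurally the wrong tool. First, $x$ need not have bought any link, so the swap may be undefined. Second, Corollary~\ref{corol_diff_cost} only compares $D(\cdot)$ on $V(H)$, so it says nothing directly about $x\in T(u)\setminus\{u\}$. Third, and most importantly, a swap from $(x,y_{d-1})$ to $(x,u)$ gains $(d-1)(n-|T(u)|)$ on vertices outside $T(u)$---the same quantity you already used---and inside $T(u)$ it can just as well lose: if the mass of $T(u)$ is concentrated near $y_{d-1}$ the swap increases those distances. Nothing in the hypotheses prevents this, so ``quadratic savings'' is not justified and the case $|T(u)|$ close to $n$ is not closed.

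The paper's resolution never looks inside $T(u)$. It instead shows that $|T(u)|$ can never be close to $n$: pick $z\in V(H)$ at maximum distance from $u$ (girth $\ge 6$ for $\alpha>2n$ forces $d_G(u,z)\ge 3$), and split $D(z)-D(u)$ over $T(u)$ and its complement. Since every shortest path from $z$ to a vertex of $T(u)$ passes through $u$, each $x\in T(u)$ contributes exactly $d_G(u,z)$, while each $x\notin T(u)$ contributes at least $-d_G(u,z)$; hence $D(z)-D(u)\ge d_G(u,z)\,(2|T(u)|-n)$. Combined with $D(z)-D(u)<2n$ from Corollary~\ref{corol_diff_cost} and $d_G(u,z)\ge 3$, this yields $|T(u)|<5n/6$ unconditionally. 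Plugging into the buy-deviation bound gives $depth(T(u))\le 1+17/(1/6)=103$, and hence $diam(G)\le diam(H)+206$. Replace your final paragraph with this two-line estimate and the proof is complete.
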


\begin{proof}

 For $\alpha > 17n$ every \NE is a tree so in this case there do not exist any nontrivial  $2-$edge connected component. Therefore it is enough to show that $diam(G) \leq diam(H)+206$ when $\alpha \leq 17n$.

 Indeed, let $u',v'$ be nodes such that $d_G(u',v') = diam(G)$. Assume that $u,v\in V(H)$ are the nodes such that $u' \in T(u),v'\in T(v)$. Let $l_u = d_G(u,u')$ and $l_v = d_{G}(v,v')$. Since $G$ is a \NE, if $u'$ buys a link to $u$ then it holds that: 

$$0 \leq \alpha -(l_u-1)(n-U) \leq  17n-(l_u-1)(n-U)\Rightarrow l_u \leq 1 + \frac{17n}{n-U}$$

 Similarly, $l_v \leq 1+ \frac{17n}{n-V}$. Next, let $z,t$ be nodes at maximum distance from $u,v$ respectively. Since $\alpha > 2n$ by Proposition \ref{propo_mihalak} the girth of $G$ is greater than or equal $2(2+1) = 6$ so that $d_G(u,z),d_G(v,t) \geq 3$. Also, notice that $|d_G(z,x)-d_G(u,x)| \leq d_G(u,z)$ by the triangular inequality. Using this together with Corollary \ref{corol_diff_cost} we obtain: 

$$2n> D(z)-D(u) = \sum_{x \in V(G)} \left( d_G(z,x)-d_G(x,u)  \right) = $$

$$= \sum_{x \in T(u)} \left( d_G(z,x)-d_G(x,u)  \right) + \sum_{x \not \in T(u)}\left(d_G(z,x)-d_G(x,u) \right)  \geq $$

$$ \geq \sum_{x \in T(u)}d_G(u,z) + \sum_{x \not \in T(u)}(-d_G(u,z)) = d_G(u,z)U-d_G(u,z)(n-U) =$$

$$= d_G(u,z)(2U-n) $$

 Hence, either $U \leq n/2$ and then $l_u \leq 1 + \frac{17}{1/2} = 35$ or if $U > n/2$ then the previous inequality implies that $U < \frac{2n/3+n}{2} = \frac{5}{6}n$ so that $l_u \leq 1+\frac{17}{1/6} = 103$. Therefore: 

$$diam(G) \leq l_u+diam(H)+l_v \leq diam(H)+206$$

 As we wanted to see.

\end{proof}

In the following, we are going to improve the lower bound for $deg(H)$. The basic idea is to analyze the structure that form the edges from $H''$ in a bit more of detail than we did in Proposition \ref{proposition_weights}. In there, we exploited the fact that there do not exist cycles of edges from $H''$, thus deducing that $H''$ is a forest. This approach could be regarded as a kind of linear exploration, in the sense that we only deviated in the direction that the $2-$nodes forming the edges from $H''$ define, which is unique. Recall that for any $u \in V(H'')$, $deg_{H''}^-(u) \leq 1$ and $deg_{H'}(u) \geq 3$. Then, we can consider the following deviation: $u \in V(H'')$ deletes two links (from $H$) and buys a link to a node close to it. If the variation of the sum of the distances to the other nodes is small enough, then this deviation could represent an advantage to $u$. This is exactly what we are going to use in the following two lemmas.

First, we need to extend the definitions we made in the previous section about coordinate systems. In this new scenario we are dealing with a subgraph $X$ and three nodes $v_1,v_2,v_3 \in X$ having the exact same properties as in the case of cardinality two. The same definitions work except that now the boundary of $X$, which is called $\partial X$, has three elements, $v_1,v_2,v_3$. Then defining analogously $x_1,x_2,x_3$, we obtain a coordinate system of cardinality three that is noted as $(\partial X)_1 =v_1,(\partial X)_2 = v_2,(\partial X)_3 = v_3$.   Let $a_1,a_2,a_3\in \mathbb{Z}$, then: $\langle \underline{x_1+a_1},x_2+a_2,x_3+a_3\rangle$  refer to the cardinality of the set of nodes $v \not \in \bar{X}$ such that $x_1(v)+a_1 \leq x_2(v)+a_2,x_3(v)+a_3$. Similarly, $\langle x_1+a_1, \underline{x_2+a_2},x_3+a_3\rangle$ refer to the cardinality of the set of nodes $v \not \in \bar{X}$ such that $x_2(v)+a_2 \leq x_1(v)+a_1,x_3(v)+a_3$. Finally, $\langle \underline{x_1+a_1}, \underline{x_2+a_2}, x_3+a_3 \rangle$ refer to the cardinality of the set of nodes $v \not \in \bar{X}$ such that $\min(x_1(v)+a_1,x_2(v)+a_2) \leq x_3(v)+a_3$.


\begin{lemma} Let $G$ be a \NE for $\alpha > 4n$ and $H \subseteq G$ a non-trivial $2-$edge connected component. Assume that $g(G) \geq 16$ and $diam(H) \geq 62$. Let $\pi = u_0-u_1-u_2-...-u_k$ be a path in $H$ having at least three $2-$nodes, with $k \leq 7$. Then there cannot be more than one $2-$path $\pi' = u_k-...-u_{k+l}$ with $l \geq 2$. 
\label{lemma_pes_0}
\end{lemma}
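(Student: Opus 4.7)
The plan is to argue by contradiction: suppose two distinct $2$-paths $\pi_1' = u_k - v_1 - v_2 - \ldots - v_{l_1}$ and $\pi_2' = u_k - w_1 - w_2 - \ldots - w_{l_2}$ with $l_1, l_2 \geq 2$ both emanate from $u_k$. Since each has length at least $2$, it contains at least one interior $2$-node, so each corresponds to an edge of $H''$ incident to $u_k$; in particular $u_k \in V(H'')$. By the recalled bound $deg_{H''}^-(u_k) \leq 1$, at least one of the two $2$-paths is oriented outward from $u_k$; after relabeling this is $\pi_2'$, so $u_k$ owns the link $(u_k, w_1)$.

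Next I would set up the three-dimensional coordinate system just introduced before the statement of the lemma, taking $X = \pi \cup \pi_1' \cup \pi_2'$ and $\partial X = \{u_0, v_{l_1}, w_{l_2}\}$. The hypotheses $g(G) \geq 16$ and $\operatorname{diam}(H) \geq 62$ play a role here: the girth bound ensures that the three arms share only the vertex $u_k$ (so the coordinates $x_1, x_2, x_3$ agree with the distances in $G$ on $V(G) \setminus \overline{X}$), while the diameter bound guarantees that there is a non-trivial portion of $H$ outside of $X$ on which the coordinate system is actually exploited.

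The key step is then to consider a deviation in the spirit of the passage preceding this lemma: a node of $V(H'')$, chosen to be $u_k$ itself, deletes two of its owned incident edges---the link $(u_k, w_1)$ together with a second owned edge, which is supplied either by the orientation of $\pi_1'$ (if it is also outward-oriented) or, otherwise, by an owned edge of $\pi$ that exists thanks to the distribution of $\pi$'s three $2$-nodes forced by $k \leq 7$---and buys a single new link to a node close to $u_k$, such as $v_{l_1}$ or a deep $2$-node of $\pi_1'$. Applying the natural three-coordinate analogue of Lemma~1, the cost difference of this deviation would decompose into a link-cost contribution $-\alpha$, a linear combination of the weights $V_i, W_i, U_i$ of the relevant $2$-nodes (controlled by Proposition~\ref{prop_weight_edges} and Corollary~\ref{prop_weight_edges_corol}), and counting terms of the form $\langle \underline{x_i+a},\, x_j+b,\, x_l+c \rangle$. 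These counting terms would be collected via an identity analogous to the final display in the proof of Proposition~\ref{prop_weight_edges}, yielding an upper bound involving only $n$ and the already-accounted-for $2$-node weights. The conclusion would be that the total cost change is strictly negative whenever $\alpha > 4n$, contradicting that $G$ is a Nash equilibrium.

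The hard part will be the ownership bookkeeping: the feasibility of the ``delete-two, buy-one'' deviation at $u_k$ depends on exactly which edges incident to $u_k$ are owned by $u_k$, which in turn depends on whether $\pi_1'$ is outward-oriented and on which of the vertices $u_{k-1}, u_{k-2}, \ldots$ are the $2$-nodes of $\pi$. A short case analysis using the hypotheses $k \leq 7$ and ``at least three $2$-nodes in $\pi$'' should supply the second owned edge in every case, either from $\pi_1'$ or from a $2$-path embedded in $\pi$ that terminates at $u_k$. Once this is arranged, gathering the counting terms in the three-dimensional coordinate system and combining them with the weight bounds is a mechanical, if somewhat lengthy, computation along the lines of the proof of Proposition~\ref{prop_weight_edges}.
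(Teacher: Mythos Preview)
Your plan misses the central mechanism of the proof. A single ``delete-two, buy-one'' deviation at $u_k$ produces a cost change of the form $-\alpha + \Delta D$, where $\Delta D$ is the increase in $u_k$'s distance sum; but after deleting the two links into $\pi_1',\pi_2'$, the detour to reach nodes beyond $v_1$ and $w_1$ runs around minimal cycles of length up to $2\,\mathrm{diam}_H(u_k)+1$, so $\Delta D$ is of order $f\cdot n$ with $f=\mathrm{diam}_H(u_k)$. Nothing about $\alpha>4n$ makes $-\alpha + \Theta(fn)$ negative. The paper resolves this by pairing that deviation with a \emph{second} one: a node $z$ at maximal distance $f$ from $u_k$ buys a link to $u_k$, yielding $\Delta C_2 < \alpha - (f-13)(U_{i_1}+U_{i_2}+U_{i_3})$. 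Summing $\Delta C_1+\Delta C_2$ cancels $\alpha$ entirely; then a cascade of $2$-swap inequalities on the three $2$-nodes of $\pi$ ($U_{i_1}\geq U_{i_2}+U_{i_3}+V_1+V_2$, $U_{i_2}\geq U_{i_3}+V_1+V_2$, $U_{i_3}\geq V_1+V_2$) and on $v_1,v_2$ ($V_i\geq\langle\underline{2+x_i},2+x_j,x_3\rangle$) reduces everything to a single counting term with coefficient $-3f+91$. This is where $\mathrm{diam}(H)\geq 62$ actually enters: it forces $f\geq 31$, not merely that ``there is a non-trivial portion of $H$ outside $X$'' as you wrote.

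Two further discrepancies worth noting. First, the paper's coordinate system is on $\phi=\pi_1\cup\pi_2$ with boundary $\{w_1,w_2,u_k\}$, placing $u_k$ itself on the boundary; the path $\pi$ is not part of the coordinate region at all, and its three $2$-nodes are handled purely through the $2$-swap weight inequalities, which you do not mention. Second, Proposition~\ref{prop_weight_edges} and Corollary~\ref{prop_weight_edges_corol} only bound the length of maximal $2$-paths; they do not control the weights $U_i,V_i,W_i$ in the way your sketch suggests, and they are not invoked in the paper's argument for this lemma.
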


\begin{proof}
 Assume the contrary and we see that a contradiction is reached. Indeed assume that $\pi_1 = u_k-v_1-w_1$, $\pi_2 = u_k-v_2-w_2$ are two $2-$paths of length two adjacent to $u_k$ and disjoint with $\pi$. Let $u_{i_1},u_{i_2},u_{i_3}$ be three $2-$nodes from $\pi$ and let $(\partial \phi)_1 = w_1, (\partial \phi)_2 = w_2$ and $(\partial \phi)_3 = u_k$ be a coordinate system, where $\phi = \pi_1 \cup \pi_2$. Consider the $2-$swaps on $u_{i_1},u_{i_2},u_{i_3}$. Since $k \leq 7$ and $g(G) \geq 16=2\cdot (7+1)$, when considering the $2-$swap on $u_{i_1}$, the node $u_{i_1-1}$, sees every node in $T(v_1),T(v_2),T(u_{i_2})$ and $T(u_{i_3})$ one unit closer than before deviating. Also, it is clear that in such deviation every node in $T(u_{i_1})$ gets one unit further from $u_{i_1-1}$ and that there are no more nodes in $G$ having this property. Thus, imposing that $G$ is a \NE we get the inequality:

$$U_{i_1} \geq U_{i_2}+U_{i_3}+V_1+V_2$$

 In a similar way, if we consider the $2-$swaps on $u_{i_2}$ and $u_{i_3}$ we get the inequalities:

$$U_{i_2} \geq U_{i_3}+V_1+V_2$$

 And

$$U_{i_3} \geq V_1+V_2$$

 Respectively. 

 Now consider the $2$-swap on $v_1$. For any node $v\in V(G)$ the distance change to $u_k$ belongs to the set $\left\{-1,0,1\right\}$. Clearly, $u_k$ gets one unit further from every node inside $T(v_1)$ when deviating and from the remaining nodes the distance change is either $-1$ or $0$. More precisely, from this remaining set of nodes, $u_k$ gets one unit closer exactly to the nodes $v \not \in \bar{\phi}$ satisfying $x_1(v)+1 < x_2(v)+2,x_3(v)$, which is the same as saying $x_1(v)+2 \leq x_2(v)+2,x_3(v)$. Therefore, imposing that $G$ is a \NE we get the following inequality:

$$V_1 \geq \langle \underline{2+x_1},2+x_2, x_3 \rangle $$ 

 Likewise, considering the same reasoning in $v_2$: 

$$V_2 \geq \langle 2+x_1, \underline{2+x_2}, x_3 \rangle $$

 By Proposition \ref{propo_cicles1} there exist two minimal cycles $c_1,c_2$ passing through $\pi_1, \pi_2$, respectively. Also, by Proposition \ref{propo_cicles2} neither of $c_1,c_2$ contains simultaneously the two $2-$paths $\pi_1,\pi_2$. This implies that when we delete the links $(u_k,v_1),(u_k,v_2)$ we can use $c_i$ to go from $u_k$ to $v_i$ for $i=1,2$. 
Consider the deviation that consists in deleting the edges $(u_k,v_1),(u_k,v_2)$ and adding a link to $u_{i_1}$ and call $\Delta C_1$ the corresponding cost difference. 

 For this deviation, notice that any node $v \not \in \bar{\phi}$ gets further from $u_k$ in the deviated network iff $x_3(v) \leq \min(x_1(v)+2,x_2(v)+2)$ does not hold, i.e, iff $\min(x_1(v)+2,x_2(v)+2) < x_3(v)$. Also, for such set of nodes, the corresponding distance change in the deviated network is of at most $(l(c_i)-2)-2 \leq (2f+1-2)-2 < 2f$, where $f = diam_H(u_k)$, because as said before, we can use the cycle $c_i$ and we know that $l(c_i) \leq 2f+1$ because $c_i$ is minimal. 

 Using similar arguments, it can be shown that the remaining nodes $v \in \bar{\phi} = T(v_1) \cup T(v_2)$ gets further from $u_k$ in the deviated network, too, and that the corresponding distance change is also upper bounded by $2f$. 

 Therefore, imposing that $G$ is a \NE we get:

$$ \Delta C_1 < -\alpha +2f\left(V_1+V_2+\langle \underline{2+x_1}, \underline{2+x_2},x_3\rangle \right)$$ 

 On the other hand, let $z$ be a node at the maximum distance from $u_k$, i.e a node verifying $d_H(u_k,z)=f$, and consider the deviation that consists in adding a link from $z$ to $u_k$. Call $\Delta C_2$ the cost difference associated to such deviation. Notice that the distance change (in absolute value) associated to each node in $T(u_{i_1})$ is at least $d_H(z,u_{i_1})-(1+d_H(u_k,u_{i_1})) \geq f-2d_{H}(u_k,u_{i_1})-1 \geq f-13 $, using the triangular inequality together with $k \leq 7$. Moreover, the same upper bound works if we consider the nodes in $T(u_{i_2})$ and $T(u_{i_3})$. Therefore, imposing that $G$ is a \NE we get:

$$ \Delta C_2 < \alpha - (f-13)\left(U_{i_1}+ U_{i_2}+U_{i_3}\right)$$

 Thus, adding these two inequalities and combining the resulting inequality with the previous ones we get:
$$ \Delta C_1+\Delta C_2 < - (f-13)\left(U_{i_1}+ U_{i_2}+U_{i_3}\right)+2f \left( V_1+V_2+\langle \underline{2+x_1}, \underline{2+x_2},x_3\rangle \right) \leq $$
$$ \leq  -2(f-13) \left(U_{i_2}+U_{i_3}\right)+(f+13) \left(V_1+V_2\right)+2f \langle \underline{2+x_1}, \underline{2+x_2},x_3\rangle \leq $$
$$ \leq -4(f-13) U_{i_3}-(f-39)\left( V_1+V_2\right)+2f\langle \underline{2+x_1}, \underline{2+x_2},x_3\rangle \leq $$
$$\leq \left(-5f+91\right)(V_1+V_2)+2f\langle \underline{2+x_1}, \underline{2+x_2},x_3\rangle $$
 But

$$V_1+V_2 \geq \langle \underline{2+x_1}, 2+x_2,x_3\rangle+\langle 2+x_1, \underline{2+x_2},x_3\rangle \geq  \langle \underline{2+x_1}, \underline{2+x_2},x_3\rangle$$ 

 Therefore:

$$\Delta C_1+\Delta C_2 < \left(-3f+91\right)\langle \underline{2+x_1}, \underline{2+x_2},x_3\rangle \leq  0$$ 

 Because by assumption $diam(H) \geq 62$ so that $f \geq 31$ by the triangular inequality.

\end{proof}

\begin{lemma} Let $G$ be a \NE for $\alpha > 4n$ and $H$ a nontrivial $2-$edge connected component. Assume that $g(G) \geq 12$ and that $diam(H) \geq 126$.  Let $\pi = u_0-u_1-u_2-...-u_k$ be path in $H$ having at least two $2-$nodes, with $k \leq 4$. Then there cannot be more than one $2-$path $\pi' = u_k-...--u_{k+l}$ with $l > 2$. 

\label{lemma_pes_1}
\end{lemma}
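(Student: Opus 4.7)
The argument mirrors that of Lemma~\ref{lemma_pes_0}, adapted to the weaker girth and the new length constraints. Suppose for contradiction that two distinct $2$-paths $\pi_j=u_k-v_j-w_j-\cdots-z_j$ of length $l_j>2$ both attach to $u_k$, pairwise disjoint from $\pi$. Let $u_{i_1},u_{i_2}$ (with $i_1<i_2$) be two $2$-nodes of $\pi$; set $\phi=\pi_1\cup\pi_2$, $f=diam_H(u_k)$, and use the three-boundary coordinate system $(\partial\phi)_1=z_1,(\partial\phi)_2=z_2,(\partial\phi)_3=u_k$. The hypothesis $diam(H)\ge 126$ and the triangular inequality give $f\ge 63$. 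Note also that since $\pi$ has two $2$-nodes in $\{u_1,\ldots,u_{k-1}\}$, one has $i_1\le k-2$, so $k-i_1\ge 2$.

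The first step is to collect the weight inequalities coming from the relevant $2$-swaps. Since $g(G)\ge 12$ strictly exceeds $2(k+1)\le 10$, a $2$-swap on $u_{i_\ell}$ genuinely brings the deviator $u_{i_\ell-1}$ one unit closer to every node of $\bigcup_{j>i_\ell}T(u_j)$ and to every node in every $T$-subtree on $\pi_1\cup\pi_2$ (including the endpoint trees $T(z_1),T(z_2)$). Imposing that $G$ is a \NE yields $U_{i_1}\ge U_{i_2}+S'$ and $U_{i_2}\ge S'$, whence $U_{i_1}+U_{i_2}\ge 3S'$, where $S'$ is the total weight of all $2$-nodes and endpoints of $\pi_1\cup\pi_2$. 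Applying the same idea to $2$-swaps on the interior $2$-nodes of each $\pi_j$ produces the chains $V_j\ge W_j+\cdots+Z_j$ and $W_j\ge\cdots\ge Z_j$, so that $S'\ge 4(Z_1+Z_2)$ and, summing for $j=1,2$, $V_1+V_2\ge A+(W_1+W_2+Z_1+Z_2)$ where $A=\langle \underline{x_1+l_1},\underline{x_2+l_2},x_3\rangle$; in particular $V_1+V_2\ge A$.

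Next, exactly as in Lemma~\ref{lemma_pes_0}, I would perform two deviations. Deviation~1 is by $u_k$ (who owns $(u_k,v_1)$ and $(u_k,v_2)$ by the $2$-path convention): delete these two edges and buy $(u_k,u_{i_1})$. By Propositions~\ref{propo_cicles1} and~\ref{propo_cicles2}, through each $(u_k,v_j)$ runs a minimal directed cycle of length at most $2f+1$, bounding the rerouting penalty for every node of $\bar\phi\cup T(z_1)\cup T(z_2)$ and every node counted by $A$ by at most $2f$; on the other hand, the new edge makes $u_k$ gain at least $k-i_1-1\ge 1$ unit on every node of $T(u_{i_1})$, contributing $-U_{i_1}\le -2S'$. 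Deviation~2 is by a vertex $z$ with $d_H(u_k,z)=f$ buying the link $(z,u_k)$; every $x\in V(G)$ with $d_G(u_k,x)\le 3$ becomes strictly closer to $z$ by at least $f-2d_G(u_k,x)-1\ge f-7$, so the NE condition produces a lower bound on $\alpha$ involving the weight of the whole ball $B_3(u_k)$, which contains $T(u_{i_1})\cup T(u_{i_2})$ together with all the subtrees $T(v_j)\cup T(w_j)\cup T(z_j)$ for $j=1,2$.

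Finally, summing the two NE inequalities $\Delta C_1\ge 0$ and $\Delta C_2\ge 0$, the $\alpha$ terms cancel; substituting $U_{i_1}+U_{i_2}\ge 3S'$, the sharp coordinate bound $A\le V_1+V_2-(W_1+W_2+Z_1+Z_2)$, and $S'\ge 4(Z_1+Z_2)$, one reduces the expression $\Delta C_1+\Delta C_2$ to a form whose leading coefficient in $f$ is strictly negative for $f\ge 63$, producing the desired contradiction. The main obstacle is precisely this arithmetic calibration: having only two $2$-nodes on $\pi$ (instead of three as in Lemma~\ref{lemma_pes_0}) yields the weaker basic inequality $U_{i_1}+U_{i_2}\ge 3S'$, which by itself is not enough; one must additionally exploit both the ``gain'' term $-U_{i_1}\le -2S'$ inside $\Delta C_1$ (available only because $k-i_1\ge 2$) and the larger set of subtrees packed into the $B_3(u_k)$ contribution of $\Delta C_2$. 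Balancing these two compensations against the weaker girth $g(G)\ge 12$ is exactly what forces the hypothesis $diam(H)\ge 126$, roughly twice the analogous bound in Lemma~\ref{lemma_pes_0}.
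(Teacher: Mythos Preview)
Your plan follows the same two-deviation skeleton as the paper, but the arithmetic cannot close because you are missing the decisive coordinate inequality. In the paper the $2$-swaps on $w_1,w_2$ give
\[
W_1\ \ge\ \langle \underline{2+x_1},\,4+x_2,\,1+x_3\rangle,\qquad
W_2\ \ge\ \langle 4+x_1,\,\underline{2+x_2},\,1+x_3\rangle,
\]
and the combinatorial fact that these two coordinate terms together dominate $A=\langle \underline{3+x_1},\underline{3+x_2},x_3\rangle$ yields $W_1+W_2\ge A$. Combined with the $v_j$-swap inequalities this gives $V_1+V_2+W_1+W_2\ge 3A$, and it is precisely this factor $3$ that turns the sum $\Delta C_1+\Delta C_2$ into $(-f+63)A\le 0$. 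You extract only \emph{weight} chains ``$W_j\ge\cdots\ge Z_j$'' from the $w_j$-swaps, which carry no information about $A$; moreover the endpoint weights $Z_j$ should not appear in those chains at all (the $2$-swap formula of Lemma~1 stops at $U_{k-1}$, never at the boundary vertex). With only $V_1+V_2\ge A$ you get $V_1+V_2+W_1+W_2\ge A$, not $\ge 3A$.

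Your two proposed compensations do not rescue this. The extra $-U_{i_1}$ in $\Delta C_1$ is worth at most $-2S$, and enlarging the $\Delta C_2$ contribution by the $V_j,W_j$ subtrees saves an additional $(f-7)S$; feeding these into the sum and using only $S\ge A$ one obtains at best
\[
\Delta C_1+\Delta C_2\ \le\ (-2f+26)S+2fA\ \le\ 26A,
\]
which is positive for every $f$. So the ``gain'' terms you highlight contribute only $O(A)$, whereas what is needed is the multiplicative amplification $S\ge 3A$ coming from the $w_j$ coordinate bound. A secondary issue: your girth check ``$g(G)\ge 12>2(k+1)$'' only certifies distances along $\pi$; to control distances out to the endpoints $z_j$ of the full $\pi_j$ you would need $g(G)\ge 2(k+l_j)$, which can fail for $l_j=4$. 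This is exactly why the paper truncates each $\pi_j$ to length~$3$ and places the third boundary at $t_j$.
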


\begin{proof} 

 Assume the contrary and we see that a contradiction is reached. Indeed assume that $\pi_1 = u_k-v_{1}-w_1-t_1$, $\pi_2 =u_k-v_2-w_2-t_2$ are two $2-$paths of length two adjacent to $u_k$ and disjoint with $\pi$. Let $u_{i_1},u_{i_2}$ be two $2-$nodes from $\pi$ and let $(\partial \phi)_1 = t_1, (\partial \phi)_2 = t_2$ and $(\partial \phi)_3 = u_k$ be a coordinate system, where $\phi = \pi_1 \cup \pi_2$. Consider the $2-$swaps on $u_{i_1},u_{i_2}$. Since $g(G) \geq 12=2\cdot (4+2)$, when considering the $2-$swap on $u_{i_1}$, the node $u_{i_1-1}$, sees every node in $T(v_1),T(v_2),T(w_1),T(w_2)$ and $T(u_{i_2})$ one unit closer than before deviating. Also, it is clear that in such deviation every node  in $T(u_{i_1})$ gets one unit further from $u_{i_1-1}$ and that there are no more nodes in $G$ having this property. Thus, imposing that $G$ is a \NE we get the inequality: 

$$U_{i_1} \geq U_{i_2}+V_1+V_2+W_1+W_2$$

 In a similar way, if we consider the $2-$swap on $u_{i_2}$ we get the inequality:

$$U_{i_2} \geq V_1+V_2+W_1+W_2$$

 Now consider the $2$-swap on $v_1$. For any node $v\in V(G)$ the distance change to $u_k$ belongs to the set $\left\{-1,0,1\right\}$. Clearly, $u_k$ gets one unit further from every node inside $T(v_{1})$ when deviating and from the remaining nodes the distance change is either $-1$ or $0$. More precisely, from this remaining set of nodes, $u_k$ gets one unit closer exactly to the nodes $v \not \in \bar{\phi}$ satisfying $x_1(v)+2 < x_2(v)+3,x_3(v)$, which is the same as saying $x_1(v)+3 \leq x_2(v)+3,x_3(v)$. Therefore, imposing that $G$ is a \NE we get the following inequality:

$$V_1 \geq W_1+ \langle \underline{3+x_1},3+x_2, x_3 \rangle $$ 

 Likewise, considering the same reasoning in $v_2$: 

$$V_2 \geq W_2 + \langle 3+x_1, \underline{3+x_2}, x_3 \rangle $$

 In a similar way, considering the $2-$swaps on $w_1$ and $w_2$ and imposing that $G$ is a \NE we get the inequalities: 

$$W_1 \geq \langle \underline{2+x_1},4+x_2,1+x_3 \rangle$$
$$W_2 \geq \langle 4+x_1, \underline{2+x_2},1+x_3 \rangle$$


 Now, by Proposition \ref{propo_cicles1} there exist two minimal cycles $c_1,c_2$ passing through $\pi_1, \pi_2$, respectively. Also, by Proposition \ref{propo_cicles2} neither of $c_1,c_2$ contains simultaneously the two $2-$paths $\pi_1,\pi_2$. This means that when we delete the edges $(u_k,v_1),(u_k,v_2)$ we can use $c_i$ to go from $u_k$ to $v_i$ and $w_i$ for $i=1,2$. 
Consider the deviation that consists in deleting the edges $(u_k,v_1),(u_k,v_2)$ and adding a link to $u_{i_1}$ and call $\Delta C_1$ the corresponding cost difference. 

 For this deviation, notice that any node $v \not \in \bar{\phi}$ gets further from $u_k$ in the deviated network iff $x_3(v) \leq \min(x_1(v)+3,x_2(v)+3)$ does not hold, i.e, iff $\min(x_1(v)+3,x_2(v)+3) < x_3(v)$. Also, for such set of nodes, the corresponding distance change in the deviated network is of at most $(l(c_i)-3)-3 \leq (2f+1-3)-3 < 2f$, where $f = diam_H(u_k)$, because as said before, we can use the cycle $c_i$ and we know that $l(c_i) \leq 2f+1$ because $c_i$ is minimal. 

 Using similar arguments, it can be shown that the remaining nodes $v \in \bar{\phi} = T(v_1) \cup T(v_2) \cup T(w_1) \cup T(w_2)$ gets further from $u_k$ in the deviated network, too, and that the corresponding distance change is also upper bounded by $2f$. 

 Therefore, imposing that $G$ is a \NE we get:

$$ \Delta C_1 < -\alpha +2f\left(V_1+V_2+W_1+W_2+\langle \underline{3+x_1}, \underline{3+x_2},x_3\rangle \right)$$ 

 On the other hand, let $z$ be a node at the maximum distance from $u_k$, i.e a node verifying $d_H(u_k,z)=f$, and consider the deviation that consists in adding a link from $z$ to $u_k$. Call $\Delta C_2$ the cost difference associated to such deviation. Notice that the distance change (in absolute value) associated to each node in $T(u_{i_1})$ is at least $d_H(z,u_{i_1})-(1+d_H(u_k,u_{i_1})) \geq f-2d_{H}(u_k,u_{i_1})-1 \geq f-7 $, using the triangular inequality together with $k \leq 4$. Moreover, the same upper bound works if we consider the nodes in $T(u_{i_2})$. Therefore, imposing that $G$ is a \NE we get:

$$ \Delta C_2 < \alpha - (f-7)\left(U_{i_1}+ U_{i_2}\right)$$

 Thus, adding these two inequalities and combining the resulting inequality with the previous ones we get:

$$ \Delta C_1+\Delta C_2 < - (f-7)\left(U_{i_1}+ U_{i_2}\right)+2f \left( V_1+W_1+V_2+W_2+\langle \underline{3+x_1}, \underline{3+x_2},x_3\rangle \right) \leq $$
$$ \leq  -2(f-7)U_{i_2}+(f+7)\left(V_1+W_1+V_2+W_2\right)+2f \langle \underline{3+x_1}, \underline{3+x_2},x_3\rangle \leq $$
$$ \leq -(f-21)(V_1+W_1+V_2+W_2)+2f \langle \underline{3+x_1}, \underline{3+x_2},x_3\rangle \leq $$
$$  \leq -(3(f-21)-2f) \langle \underline{3+x_1}, \underline{3+x_2},x_3 \rangle \leq 0$$

 Where we have used that 

$$\langle \underline{2+x_1},4+x_2, x_3+1 \rangle+\langle 4+x_1, \underline{2+x_2}, x_3+1 \rangle \geq \langle \underline{3+x_1}, \underline{3+x_2},x_3\rangle $$ 

 And that $diam(H) \geq 126$ so that $f \geq 63$ by the triangular inequality.

\end{proof}

 Starting at an arbitrary edge from $E(H')$ of positive weight, we can construct a walk of adjacent edges through $E(H')$ in such a way that the average of the weights of the edges from the walk is small enough. The basic idea is that if we have already built a path $\pi$ of edges from $E(H')$ and we are currently standing on the edge $e \in E(H')$, we can apply Lemma \ref{lemma_pes_0} and Lemma \ref{lemma_pes_1} to $\pi$ to deduce that, among all the unvisited edges adjacent to $e$, we can choose at least one edge $e'$ of weight $0$ or $1$ such that  when adding $e'$ to $\pi$ the average of the weights of the edges from $\pi$ is reduced. For an appropriate girth, we can build walks of this kind guaranteeing that the sets of visited edges in each walk are mutually disjoint. In this way, we can diminish the average weight of the edges of positive weight from $E(H')$ thus getting a higher lower bound for $deg(H)$.


\begin{proposition} 
\label{propFinal}
Let $G$ be a \NE for $\alpha > 4n$ and $H\subseteq G$ a $2-$edge connected component of $G$. If $g(G) \geq 20$ and $diam(H) \geq 126$ then $deg(H) \geq 2+\frac{1}{2}$.
\end{proposition}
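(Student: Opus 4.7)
The proof continues in the framework of Proposition~\ref{proposition_weights}. The plan is to pass to the auxiliary weighted (multi)graph $H'$ with $V(H') = H_{\geq 3}$, whose edges correspond to maximal $2$-paths of $H$, labelled by their weights $w(e)=k-1$ (where $k$ is the $2$-path length). Let $m = |V(H')|$ and $W = \sum_{e\in E(H')} w(e)$. Since every vertex $u\in V(H')$ satisfies $deg_H(u)=deg_{H'}(u)\geq 3$, the identity obtained in Proposition~\ref{proposition_weights} gives
\[
deg(H) \;=\; 2 + \frac{\sum_{u\in V(H')}(deg_H(u)-2)}{m+W} \;\geq\; 2 + \frac{m}{m+W},
\]
because each summand in the numerator is at least $1$. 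Hence the target bound $deg(H)\geq 2+\tfrac{1}{2}$ reduces to proving the single inequality $W \leq m$. All subsequent effort would go into establishing this bound.

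To prove $W\leq m$, I would implement the strategy sketched in the paragraph preceding the statement: construct a family of disjoint walks in $H'$ that together contain every positive-weight edge and each of which has small enough average weight for the cumulative weight to be absorbed by the vertex count. The construction is greedy. Starting a walk at an unvisited positive-weight edge $e$, one extends edge by edge while controlling the weight accumulated so far. At each step, with the current walk $\pi$ ending at a vertex $u\in V(H')$, the two structural lemmas provide light extensions: once $\pi$ has accumulated at least three $2$-nodes within a stretch of $H$-length at most $7$, Lemma~\ref{lemma_pes_0} forces at most one of the $\geq 3$ unvisited $H'$-edges at $u$ to have positive weight, so a weight-$0$ extension is available; and once two $2$-nodes have been accumulated within length at most $4$, Lemma~\ref{lemma_pes_1} allows extension by a weight-$\leq 1$ edge. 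A case analysis on the weight of the initial edge (which by Corollary~\ref{prop_weight_edges_corol} lies in $\{1,2,3\}$) then certifies that the walks produced this way have small enough average weight to yield $W \leq m$ after summing over the family.

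The girth hypothesis $g(G)\geq 20$ is used in two ways. First, it is larger than the girth required by Lemmas~\ref{lemma_pes_0} and~\ref{lemma_pes_1}, so that both lemmas remain applicable along the entire greedy extension, even as the walk grows. Second, it enforces disjointness between different walks, because any accidental overlap would produce a cycle in $G$ of length bounded by twice the relevant walk-length, contradicting the girth bound. The diameter hypothesis $diam(H)\geq 126$ is inherited directly from Lemma~\ref{lemma_pes_1} and enters whenever that lemma is invoked in the construction.

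The main obstacle I expect is the case in which the initial edge of a walk has weight exactly $1$, since neither Lemma~\ref{lemma_pes_0} nor Lemma~\ref{lemma_pes_1} applies until at least one further edge has been added. This case requires a bootstrap: first extend the walk by an arbitrary adjacent edge, then branch on its weight and show that in every subcase the lemmas eventually become applicable and the walk's running average weight drops below the required threshold before termination. The technical heart of the argument is therefore a delicate combinatorial coordination of walks — ensuring via the girth condition that they are disjoint, guaranteeing that every positive-weight edge appears in some walk, and verifying the average-weight estimate in each branch of the case analysis — which together deliver $W \leq m$ and hence $deg(H) \geq 2+\tfrac{1}{2}$.
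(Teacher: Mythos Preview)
Your overall strategy matches the paper's: to each positive-weight edge $e\in E(H')$ one attaches a short non-degenerate sequence of consecutive $H'$-edges, built greedily via Lemmas~\ref{lemma_pes_0} and~\ref{lemma_pes_1}, with a case split on $w(e)\in\{1,2,3\}$; you also correctly anticipate that the case $w(e)=1$ needs a bootstrap before either lemma applies. Two points, however, do not go through as written.

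\textbf{The reduction target.} You lower-bound the numerator $\sum_{u\in V(H')}(deg_H(u)-2)$ crudely by $m$ and conclude that it suffices to prove $W\leq m$. But the walk construction bounds weights against \emph{edges}, not vertices: grouping the edges of $E(H')$ so that each group has average weight at most $\tfrac{2}{3}$ yields $W\leq \tfrac{2}{3}|E(H')|$, and since $|E(H')|\geq \tfrac{3}{2}m$ this can be strictly larger than $m$. The paper keeps the exact numerator $\sum_{u}(deg_H(u)-2)=2|E(H')|-2m$ and shows that $W\leq \tfrac{2}{3}|E(H')|$ (together with $|E(H')|\geq \tfrac{3}{2}m$) already forces $deg(H)\geq 2+\tfrac{1}{2}$. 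Your target $W\leq m$ is strictly stronger than what the walks actually deliver, and there is no reason it should hold.

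\textbf{Disjointness.} The argument ``an overlap between two walks produces a short cycle in $G$, contradicting $g(G)\geq 20$'' is not valid: two walks in $H'$ sharing an edge need not close up into any cycle at all. The paper's disjointness argument is a $2$-swap, not a girth-vs-cycle count. If an edge lies in both $\phi(e_1)$ and $\phi(e_2)$, one locates the nearest $2$-nodes $z_1,z_2$ along the two walks and checks (using that each walk spans at most $5$ edges of $H$ up to its first $2$-node) that the buyer $t_1$ of $z_1$ is within $H$-distance $10$ of $z_2$; then, because $g(G)\geq 20$, the $2$-swap on $z_1$ brings $t_1$ strictly closer to every node of $T(z_2)$, giving cost difference strictly below $Z_1-Z_2\leq 0$ and contradicting equilibrium. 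The girth hypothesis is thus used to validate the distance bookkeeping in this deviation, not to exhibit a forbidden cycle directly.
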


\begin{proof}

 Looking at the proof of Proposition \ref{proposition_weights} it is enough to show that $\sum_{e \in E(H')}w(e) \leq \frac{2}{3}|E(H')|$. To this purpose, we shall notice that the edges in $E(H')$ can be grouped into disjoint subsets in such a way that the average of the weights of the edges in every subset is at most $2/3$.

 Equivalently, we show that we can associate to each edge in $e\in E(H')$ of strictly positive weight a subset $\phi(e)$ of edges of weight $0$ and $1$ such that the average of the weights of the edges of $\phi(e)$ together with the weight of $e$ is at most $2/3$. Also, the association is such that there is no associated edge of weight $0$ or $1$ belonging simultaneously to two subsets $\phi(e_1)$ and $\phi(e_2)$, for two distinct edges $e_1,e_2$ of strictly positive weight.

 Before proving the result we introduce some notation. Given two edges $f_1,f_2 \in E(H')$ and a node $z\in V(H')$ we say that $f_2$ is adjacent to $f_1$ in $z$ if $f_1,f_2$ share exactly the endpoint $z$. Moreover, let $f_1,f_2,...,f_k \in E(H')$ be a sequence of edges. We say that $f_1,f_2,...,f_k$ is a non-degenerate sequence of consecutive edges if for every $i$ with $1 \leq i \leq k-1$, $f_i,f_{i+1}$ share exactly one endpoint, call it $x_i$, and $x_i \neq x_j$ for every $ i,j$ with $i \neq j$. 

 Now, consider $e =(u,v) \in E(H')$ an edge of strictly positive weight. We define the association $\phi(e)$ in the following way:

 (i) If $w(e)=3$. Start with the path induced by $e$. By Lemma \ref{lemma_pes_0} there exists an edge $e_1 \in E(H')$ of weight zero such that $e_1$ is adjacent to $e$ in $v$. Now consider the path induced by $e$ together with $e_1$. Analogously, Lemma \ref{lemma_pes_0} tells us that there exists an edge $e_2 \in E(H')$ of weight zero such that $e,e_1,e_2$ is a non-degenerate sequence of consecutive edges. We can repeat this argument at least two more times to find two more edges of weight zero, call them $e_3,e_4$, such that $e,e_1,e_2,e_3,e_4$ is a non-degenerate sequence of consecutive edges.  Then, letting $\phi(e) = \left\{ e_1,e_2,e_3,e_4\right\}$ the desired requirements are fulfilled since $3/5 < 2/3$. 

 (ii) If $w(e) = 2$. By Lemma \ref{lemma_pes_1} then we find an edge $e_1 \in E(H')$ of weight at most one adjacent to $e$ in $v$. We consider two subcases:  

 (a) If $w(e_1) = 0$. Then again, we can apply Lemma \ref{lemma_pes_1} to the path induced by $e,e_1$ and find an edge $e_2 \in E(H')$ of weight at most $1$ such that $e,e_1,e_2$ is a non-degenerate sequence of consecutive edges. If $w(e_2) = 0$ then letting $\phi(e) = \left\{e_1,e_2 \right\}$ the requirements are fulfilled. Otherwise, if $w(e_2) = 1$ then we can apply Lemma \ref{lemma_pes_0} two times and find two consecutive edges $e_3,e_4 \in E(H')$ of weight $0$ such that $e,e_1,e_2,e_3,e_4$ is a non-degenerate sequence of consecutive edges. Thus letting $\phi(e) = \left\{ e_1,e_2,e_3,e_4 \right\}$ then  the desired conditions are satisfied.

 (b) If $w(e_1) = 1$. Then we can apply Lemma \ref{lemma_pes_0} three times to find edges $e_2,e_3,e_4 \in E(H')$ of weight $0$ defining together with $e,e_1$, again, a non-degenerate sequence of consecutive edges. Likewise, $\phi(e) = \left\{ e_1,e_2,e_3,e_4\right\}$ satisfies the desired conditions.

 (iii) If $w(e)=1$.  We distinguish four subcases: 

 (a) There exists an edge $e_1 \in E(H')$ of weight $0$ adjacent to $e$ in $v$. Then letting $\phi(e) = \left\{e_1 \right\}$ we get that the corresponding average weight is $1/2 < 2/3$ so we are done.

 (b) If there is no edge adjacent to $e$ in $v$ of weight zero but there exists at least an edge $e_1\in E(H')$ of weight $1$. Then we can apply Lemma \ref{lemma_pes_1} to the path induced by $e,e_1$ and find at least one edge $e_2 \in E(H')$ of weight at most $1$ such that $e,e_1,e_2$ is a non-degenerate sequence of consecutive edges. If $w(e_2) = 0$ then $\phi(e) = \left\{e_1,e_2 \right\}$ satisfies the requirements. Otherwise, if $w(e_2) =1$ then we can apply two times Lemma \ref{lemma_pes_0} to obtain edges $e_3,e_4 \in E(H')$ of weight zero such that $e,e_1,e_2,e_3,e_4$ is a non-degenerate sequence of consecutive edges so that $\phi(e) = \left\{e_1,e_2,e_3,e_4\right\}$ satisfies the desired conditions.

 (c) If there is no edge adjacent to $e$ in $v$ of weight at most one but there exists at least an edge $e_1 \in E(H')$ of weight $2$. Then, again, we can apply Lemma \ref{lemma_pes_0} three times to obtain edges $e_2,e_3,e_4 \in E(H')$ of weight zero such that $e,e_1,e_2,e_3,e_4$ is a non-degenerate sequence of consecutive edges. Setting $\phi(e) = \left\{e_1,e_2,e_3,e_4 \right\}$ the required conditions are fulfilled. 

 (d) If every edge adjacent to $e$ in $v$ distinct than $e$ has weight $3$. Let $e_1 \in E(H')$ any edge adjacent to $e$ in $v$ distinct than $e$. Then we can apply four times Lemma \ref{lemma_pes_0} to obtain edges $e_2,e_3,e_4,e_5 \in E(H')$ of weight zero such that $e,e_1,e_2,e_3,e_4,e_5$ is a non-degenerate sequence of consecutive edges. Finally, if we let $\phi(e) = \left\{ e_1,e_2,e_3,e_4,e_5\right\}$ then the average of weights of $e$ together with the weights in $\phi(e)$ is exactly $2/3$, as we wanted to see.

 Now assume that $\phi$ has been defined for every edge of weight $1,2,3$. Next we show that for any two distinct edges $e_1=(u_1,v_1),e_2=(u_2,v_2) \in E(H')$ of weight $1,2$ or $3$ there is no edge $e$ belonging simultaneously to $\phi(e_1)$ and $\phi(e_2)$.

 For the sake of contradiction, assume the contrary, that is, that there is an edge $e=(u,v) \in E(H')$ with $e \in \phi(e_1) \cap \phi(e_2)$. By construction of the association $\phi$, for any edge $e' \in E(H')$, the elements of $\phi(e')$ together with $e'$ define a non-degenerate path. Let $\pi_1,\pi_2$ be these corresponding paths for $\phi(e_1),\phi(e_2)$, respectively. Now define $x_1,x_2$ to be the nearest endpoint of $e$  we found when we follow $\pi_1,\pi_2$ starting from $e_1,e_2$,  respectively. Also, let $z_1,z_2$ be the nearest $2-$nodes to $x_1,x_2$, that we find when following $\pi_1,\pi_2$ starting at $e_1,e_2$, respectively (such $2-$nodes must exist because $e_1,e_2$ have strictly positive weight). Assume that $(z_1,w_1),(t_1,z_1) \in E(H')$ and $(z_2,w_2),(t_2,z_2) \in E(H')$ are the edges that define the $2-$nodes $z_1,z_2$, respectively. As it can be seen from the construction of the association $\phi$ we need no more than $5$ edges (from $G$) to go from $t_1$ to $x_1$ through $\pi_1$ and from $t_2$ to $x_2$ through $\pi_2$, respectively.  Now suppose wlog that $Z_1 \leq Z_2$. Finally, notice that if $x_1 = x_2$ then $d_G(t_1,z_2) \leq 5+4 = 9$ whereas if $x_1 \neq x_2$ then $d_G(t_1,z_2) \leq 5+1+4 = 10$. Thus in both cases when $t_1$ swaps the link $(t_1,z_1)$ for the link $(t_1,w_1)$ he sees node $z_2$ exactly one unit closer than before because $g(G) \geq 2 \cdot 10$ by hypothesis. In this way the cost difference associated to such deviation is strictly less than $Z_1 - Z_2 \leq 0$ units, a contradiction since $G$ is a \NE. 

\end{proof}

 As a consequence of all previous results we get the following theorem:

\begin{theorem} \label{thm_9n}
For $\alpha > 9n$ the $PoA$ is constant.
\end{theorem}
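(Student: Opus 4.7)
The plan is to assemble the ingredients already proved in the excerpt and derive the diameter bound by contradiction. By Lemma \ref{lemma_diam} it suffices to show $diam(G) = O(1)$, and by Proposition \ref{diamH-upperbound} this reduces further to bounding $diam(H)$ by a constant for every non-trivial $2$-edge connected component $H$ of a \NE graph $G$ with $\alpha > 9n$. If $G$ is actually a tree there is nothing to do, so the real task is to rule out the existence of a non-trivial $2$-edge connected component $H$ with large diameter.

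Fix $\alpha > 9n$. First I would record the two easy facts that make all the heavy machinery available: since $\alpha > 4n$, Proposition \ref{diamH-upperbound} applies and yields $diam(G) \leq diam(H)+206$; and Theorem \ref{propo_mihalak} gives $g(G) \geq 2\alpha/n + 2 > 20$, which fulfils the girth hypothesis of Proposition \ref{propFinal}. Now I would argue by contradiction: suppose $H$ is a non-trivial $2$-edge connected component with $diam(H) \geq 126$. Then Proposition \ref{propFinal} gives the lower bound $deg(H) \geq 2 + \tfrac{1}{2}$. On the other hand, Lemma \ref{mihalak-upperbound} (valid for $\alpha > n$) gives
\[
deg(H) \leq 2 + \frac{4n}{\alpha - n} < 2 + \frac{4n}{8n} = 2 + \frac{1}{2},
\]
because $\alpha > 9n$ means $\alpha - n > 8n$. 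These two inequalities are incompatible, so $diam(H) < 126$ must hold.

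Combining this with Proposition \ref{diamH-upperbound} yields $diam(G) < 126 + 206 = 332$, and Lemma \ref{lemma_diam} then gives $PoA < 333$, a constant, proving the theorem. The main conceptual work has already been done in Proposition \ref{propFinal}, where the lower bound $2+\tfrac{1}{2}$ on the average degree was tuned precisely so that it matches the upper bound $2+4n/(\alpha-n)$ exactly at $\alpha = 9n$; no additional estimate is required here beyond checking that the girth and diameter hypotheses are met and then juxtaposing the two bounds. The only subtlety to double-check is that Lemma \ref{mihalak-upperbound}, stated originally for biconnected components, transfers to $2$-edge connected components (as the excerpt remarks just after its statement), so it legitimately applies to the $H$ produced by Proposition \ref{diamH-upperbound}.
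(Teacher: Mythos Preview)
Your proof is correct and follows essentially the same route as the paper's own argument: handle the tree case separately, use Theorem~\ref{propo_mihalak} to secure the girth hypothesis, derive a contradiction between Proposition~\ref{propFinal} and Lemma~\ref{mihalak-upperbound} to force $diam(H) < 126$, and then conclude via Proposition~\ref{diamH-upperbound} and Lemma~\ref{lemma_diam}. The only cosmetic difference is that the paper records $PoA \leq 332$ (using that $diam(H)$ is an integer, so $diam(H)\leq 125$ and hence $diam(G)\leq 331$), whereas you state the slightly weaker $PoA < 333$.
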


\begin{proof} 

It is well known that the $PoA$ for trees is at most $5$.  Thus, let $G$ be any \NE graph for $\alpha > 9n$ having at least one non-trivial $2-$edge connected component $H$. Since $\alpha > 9n$ then $g(G) \geq 20$ by Theorem \ref{propo_mihalak}. If $diam(H) \geq 126$ since $g(G) \geq 20$ then by Proposition \ref{propFinal} we have that $deg(H) \geq 2 +1/2$. However, by Lemma \ref{mihalak-upperbound}, $deg(H) \leq 2 + 4n/(\alpha-n )< 2+1/2$ which is a contradiction. Hence, $diam(H) < 126$. Then, by Proposition \ref{diamH-upperbound} and Lemma \ref{mihalak-upperbound} we have that $PoA \leq 332$.

\end{proof}

\section{The lower range}
\label{sec:so-and-ne}

 In this section we focus our attention to the range $\alpha < n/C$ with $C > 4$, which we call the lower range. More precisely, we show that there is a connection between the equilibria for this range and the so called distance-uniform and distance-almost-uniform graphs introduced by Alon et al. in \cite{AlonDHKL14}.


 A graph $G$ is \emph{$(k,\epsilon)-$distance-almost-uniform} if there exists an $r$ such that $\max_{i=0}^{k-1} |A_{r+i}(u)| \geq n(1-\epsilon)$ for all $u\in V(G)$, where $A_s(v)$ is the set of nodes $w$ at distance exactly $s$ from $v$. In this way, $\epsilon-$distance-uniform and $\epsilon-$distance-almost-uniform as introduced in \cite{AlonDHKL14} correspond to our $(1,\epsilon)-$distance-almost-uniform and $(2,\epsilon)-$distance-almost-uniform definitions, respectively. Now, a collection of graphs $\mathcal{F}$ is $k$-distance-almost-uniform if there exists a constant $\epsilon < 1$ for which every $F \in \mathcal{F}$ is $(k,\epsilon)-$distance-almost-uniform.

\begin{proposition}
\label{proposition-distance-almost-uniform-1}

Let $C > 4$ be a positive constant. Then every \NE $G$ for $\alpha < n/C$ is $(5,\epsilon)-$almost-distance-uniform for $\epsilon = \frac{4}{5}(1+1/C)$. 

\end{proposition}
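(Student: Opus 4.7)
My plan is to show that for every $u$, some BFS layer from $u$ has size at least $n(1-\epsilon) = n(C-4)/(5C)$, which immediately gives $(5,\epsilon)$-distance-almost-uniformity by choosing $r$ as the index of that layer. The $5$ in the definition then leaves room to absorb off-by-one effects in the case analysis below.

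The central tool is a deviation inequality. For any $u$ and any $i\ge 2$, consider the strategy modification in which $u$ additionally buys an edge to every vertex of its $i$-th BFS layer $A_i(u)$. This raises $u$'s edge cost by $|A_i(u)|\,\alpha$. At the same time, every $w\in A_j(u)$ with $j\ge i$ has a shortest $u$-$w$ path passing through some vertex of $A_i(u)$, so after the deviation $d(u,w) \le 1 + (j-i)$, a saving of at least $i-1$. Imposing that this deviation is not profitable for $u$ then gives
\[
(i-1)\,\bigl(n - B_{i-1}(u)\bigr) \;\le\; |A_i(u)|\,\alpha,
\]
where $B_{i-1}(u)$ is the ball of radius $i-1$ around $u$. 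Using $\alpha<n/C$, this rearranges to $|A_i(u)| > (i-1)(n-B_{i-1}(u))\,C/n$.

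Next, set $r_u := \min\{\,i\ge 1 : B_i(u) \ge n\epsilon\,\}$ and split into two regimes. If $r_u - 1 \ge n/C$, applying the NE inequality at $i=r_u$ together with $n - B_{r_u-1}(u) > n(1-\epsilon)$ yields $|A_{r_u}(u)| > n(1-\epsilon)$, completing the proof. If instead $r_u - 1 < n/C$, then $B_{r_u}(u) \ge n\epsilon$ is attained within a short radius, and a refined pigeonhole over the first $r_u$ layers combined with the NE inequality applied jointly at $i=r_u, r_u+1, \ldots, r_u+4$ (and summed, using $\sum_{i=r_u}^{r_u+4}|A_i(u)| = B_{r_u+4}(u) - B_{r_u-1}(u)$) forces one of the five consecutive layers $A_{r_u}(u),\ldots,A_{r_u+4}(u)$ to contain at least $n(1-\epsilon)$ vertices.

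The principal obstacle lies in the second case, where the naive pigeonhole gives only about $n\epsilon/r_u$ vertices per layer, which can be much smaller than $n(1-\epsilon)$ for moderate $r_u$. The refinement exploits that summing the NE inequality across five consecutive indices telescopes with the ball sizes and produces a lower bound on $\max_{k\in\{0,\ldots,4\}} |A_{r_u+k}(u)|$ that matches $n(1-\epsilon)$ precisely when $\epsilon = \tfrac{4}{5}(1+1/C)$; the factor $4$ in $\epsilon$ reflects the slack $5n(1-\epsilon) = n - 4n/C > n - 4\alpha$ between five layers of the target size and the total $n$. Carrying out this calibration cleanly is the arithmetic heart of the proof.
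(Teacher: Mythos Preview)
Your proposal has two genuine gaps. First, the definition of $(k,\epsilon)$-distance-almost-uniform requires a \emph{single} $r$ that works for \emph{every} vertex $u$ simultaneously; your plan only produces a per-vertex index $r_u$ (the layer you locate from $u$) and never argues that these coincide across vertices. Second, Case~2 does not close. To force one of $A_{r_u}(u),\ldots,A_{r_u+4}(u)$ above $n(1-\epsilon)$ by pigeonhole you would need $|B_{r_u+4}(u)|-|B_{r_u-1}(u)|\ge 5n(1-\epsilon)=n-4n/C$, but nothing in your toolkit lower-bounds $|B_{r_u+4}(u)|$. Your layer-buying inequality $|A_i|\,\alpha\ge(i-1)(n-|B_{i-1}|)$ is weak precisely when $i-1$ is small (e.g.\ for $r_u=2$ and $\alpha$ close to $n/C$ it yields only $|A_2|>C(1-\epsilon)$, a constant), and summing five consecutive instances does not telescope into a ball-size bound. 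The ``calibration'' in your last paragraph correctly identifies the target number $5n(1-\epsilon)=n-4n/C$ but never actually reaches it from the inequalities you have.

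The paper's argument is structurally different and handles both issues at once. It uses \emph{single-edge} additions $u\leftrightarrow w$, summed over all $w$; a double-counting step then produces a pivot vertex $v$ such that $|A_{r-1}(v)\cup A_r(v)\cup A_{r+1}(v)|>n-2\alpha$, where $r=d_G(u,v)$. For any other $w$, the single-edge deviations between $v$ and $w$ give $|\{z:|d(z,v)-d(z,w)|\le1\}|\ge n-2\alpha$, and intersecting this set with the three layers around $v$ lands more than $n-4\alpha$ vertices inside $A_{r-2}(w)\cup\cdots\cup A_{r+2}(w)$. Because $r$ depends only on the initially fixed $u$ and the pivot $v$, it is the same for every $w$; pigeonhole then gives $\max_{0\le i\le4}|A_{r-2+i}(w)|>(n-4\alpha)/5>n(1-\epsilon)$, which is exactly the required statement with a universal $r$.
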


\begin{proof}

 First, fix $u \in V(G)$. Consider for $w \neq u$ the two deviations that consist in adding a link from $u$ to $w$ and from $w$ to $u$. Let $\Delta C_1$ be the sum of all the corresponding cost differences when $w$ varies over $V(G) \setminus\left\{ u \right\}$. The part of the term corresponding to the bought links is $2(n-1) \alpha$. The part of the term corresponding to the sum of distances is: 

$$ \sum_{w \in V(G)\setminus \left\{u\right\}}  \sum_{x \in V(G)} d_{G+uw}(u,x)-d_G(u,x)+d_{G+uw}(w,x)-d_G(w,x)= $$

$$ =\sum_{x \in V(G)} \sum_{w \in V(G)\setminus \left\{u\right\}} d_{G+uw}(u,x)-d_G(u,x)+d_{G+uw}(w,x)-d_G(w,x) $$

 Where $G+uw$ denotes the graph $G$ together with the edge $uw$. Notice that the expression $d_{G+uw}(u,x)-d_G(u,x)+d_{G+uw}(w,x)-d_G(w,x)$ is less than or equal $0$ for every $x,w$, but  if $u \in A_r(x)$ and $w \in A_s(x)$ for $r,s$ with $r-s>1$ or $r-s<-1$ then $d_{G+uw}(u,x)-d_G(u,x) \leq -1$ or $d_{G+wu}(w,x)-d_G(w,x) \leq -1$, respectively. In this way, the expression $d_{G+uw}(u,x)-d_G(u,x)+d_{G+wu}(w,x)-d_G(w,x)$ is strictly negative for all $w \not \in A_u(x)$, where $A_u(x) = \left\{w \mid |d_G(u,x)-d_G(w,x)| \leq 1 \right\}=A_{d_G(u,x)-1}(x)\cup A_{d_G(u,x)}(x)\cup A_{d_G(u,x)+1}(x)$.
 
 Therefore $\Delta C_1\leq 2\alpha(n-1)-\sum_{x \in V(G) }(n-|A_u(x)|) $. So that, if $G$ is a \NE then $\sum_{x \in V(G) }|A_u(x)| \geq n^2-2(n-1)\alpha$.

 Therefore there exists at least some $v\in V(G)$ such that $|A_u(v)| > n-2\alpha$.

 Now let $u,w \in V(G)$ be two nodes and consider the two deviations that consist in adding a link from $u$ to $w$ and from $w$ to $u$. Let $\Delta C_2$ be the sum of the two corresponding cost differences. We have that $ \Delta C_2 \leq 2 \alpha -(n-|M_1(u,w)|)$ where $M_1(u,w) = \left\{z \mid |d_G(z,u)- d_G(z,w)| \leq 1 \right\}$. Since if $G$ is an equilibrium, then $|M_1(u,w)| \geq  n-2\alpha$.

 Finally, let $w \neq v$. Since $|M_1(w,v)| \geq n-2\alpha $, using that $|Y \cap Z| \geq |Y|+|Z|-|X|$ for $Y,Z$ subsets of $X$, we get $|M_1(w,v)\cap A_u(v)| > n-4\alpha$. Now, let $r$ be such that $A_u(v) = A_{r-1}(v) \cup A_r(v) \cup A_{r+1}(v)$, that is, $r = d_G(u,v)$, and pick $z \in M_1(w,v) \cap A_u(v)$. If $z \in A_{r-1}(v)$ then clearly $d_G(z,w) \in \left\{ r-2,r-1,r\right\}$ because $z \in M_1(w,v)$, too. Likewise, if $z \in A_{r}(v)$ or $z \in A_{r+1}(v)$ then $d_G(z,w) \in \left\{r-1,r,r+1 \right\}$ or $d_G(z,w) \in \left\{r,r+1,r+2\right\}$, respectively. Therefore, $ M_1(w,v) \cap A_u(v) \subseteq A_{r-2}(w)\cup A_{r-1}(w) \cup A_r(w) \cup A_{r+1}(w) \cup A_{r+2}(w)$. In this way  $|A_{r-2}(w) \cup A_{r-1}(w)\cup A_{r}(w) \cup A_{r+1}(w) \cup A_{r+2}(w)| > n-4\alpha$ for every $w \neq v$, so now the conclusion follows easily.

\end{proof}

 Given an undirected graph $G = (V,E)$ we denote by $G^k$ the graph having the same vertex set $V$ and as edges the set $\left\{uv \mid 0 <d_G(u,v) \leq k\right\}$.

\begin{proposition} 
\label{proposition-distance-almost-uniform-2}
If $G$ is $(5,\epsilon)-$distance-almost-uniform then $G^4$ is $(2,\epsilon)-$distance-almost-uniform and has diameter $\lceil diam(G)/4 \rceil$.
\end{proposition}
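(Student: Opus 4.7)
The plan is to first observe a clean relationship between distances in $G$ and in $G^4$, then derive both parts of the statement from it.

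\medskip

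\noindent\textbf{Step 1 (Distances and shells of $G^4$).} I would start by proving $d_{G^4}(u,v)=\lceil d_G(u,v)/4\rceil$. The inequality $\leq$ follows because a shortest $u$--$v$ path of length $d_G(u,v)$ can be broken into blocks of at most $4$ consecutive edges, each of which corresponds to a single edge of $G^4$. For $\geq$, any $G^4$-edge $xy$ represents a $G$-path of length at most $4$, so a $G^4$-path of length $\ell$ between $u$ and $v$ certifies $d_G(u,v)\leq 4\ell$. Taking the maximum over all pairs and using that $\lceil\cdot/4\rceil$ is non-decreasing gives $diam(G^4)=\lceil diam(G)/4\rceil$ immediately.

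\medskip

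\noindent\textbf{Step 2 (Matching shells).} The formula for the distance implies that for every $u$ and every $s\geq 1$,
\[
A^{G^4}_s(u)\;=\;A^G_{4s-3}(u)\,\cup\,A^G_{4s-2}(u)\,\cup\,A^G_{4s-1}(u)\,\cup\,A^G_{4s}(u),
\]
so each $G^4$-shell is an exact union of four consecutive $G$-shells. Crucially, any single $G$-shell $A^G_j(u)$ is contained in \emph{exactly one} $G^4$-shell, namely $A^{G^4}_{\lceil j/4\rceil}(u)$.

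\medskip

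\noindent\textbf{Step 3 (Transfer of the almost-uniformity).} Let $r$ be the integer witnessing that $G$ is $(5,\epsilon)$-distance-almost-uniform, and set $r':=\lceil r/4\rceil$. I would check by the four cases $r\equiv 0,1,2,3\pmod 4$ that for every $j\in\{0,1,2,3,4\}$,
\[
\lceil (r+j)/4\rceil \in \{r',\,r'+1\},
\]
so each of the five shells $A^G_r(u),A^G_{r+1}(u),\dots,A^G_{r+4}(u)$ is contained in either $A^{G^4}_{r'}(u)$ or $A^{G^4}_{r'+1}(u)$. By hypothesis one of those five shells has size at least $(1-\epsilon)n$; since it sits entirely inside one of the two $G^4$-shells, the containing $G^4$-shell has size $\geq (1-\epsilon)n$. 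Hence $\max_{i\in\{0,1\}}|A^{G^4}_{r'+i}(u)|\geq (1-\epsilon)n$ for every $u$, with the same $r'$ independent of $u$, which is exactly the definition of $(2,\epsilon)$-distance-almost-uniform.

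\medskip

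\noindent\textbf{Main obstacle.} There is no deep obstacle here; the statement is essentially a bookkeeping lemma. The only point that requires some care is verifying in Step 3 that the five consecutive $G$-distances $r,r+1,\dots,r+4$ collapse to only two $G^4$-distances under $\lceil\cdot/4\rceil$, which forces the almost-uniformity level to drop from $5$ to $2$ without loss in $\epsilon$. This works precisely because five consecutive integers, once divided by $4$ and rounded up, can hit at most two distinct values, whereas six consecutive integers could hit three.
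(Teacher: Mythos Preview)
Your proposal is correct and follows essentially the same route as the paper: both arguments rest on the identity $d_{G^4}(u,v)=\lceil d_G(u,v)/4\rceil$, derive the diameter claim from it, and then observe that the five consecutive $G$-distances $r,\dots,r+4$ collapse under $\lceil\cdot/4\rceil$ to two consecutive $G^4$-distances. Your version is somewhat more explicit (the case split modulo $4$ and the remark that the same $r'$ works for every $u$), but the underlying argument is identical.
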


\begin{proof}

 Notice that $\forall u,v$ if $d_{G}(u,v) = 4k+l$ with $k\geq 0$ and $0<l \leq 4$ then $d_{G^4}(u,v) = k+1$. Therefore, any generic distance value $d$ changes to $\lceil d/4 \rceil$. From here the fact that $diam(G^4) = \lceil diam(G)/4 \rceil$. 

 Next, let $r$ be the value for which $\max_{i=0}^{4} |A_{r+i}(u)| \geq n(1-\epsilon)$. Clearly the distances $r,r+1,r+2,r+3,r+4$ are $5$ consecutive values. Then when changing from $G$ to $G^4$ these values collapse to a set consisting of exactly two consecutive values. From here the conclusion. 

\end{proof}

The conjecture stated in 
 \cite{AlonDHKL14} saying that \emph{Distance-almost-uniform graphs have diameter $O(\log n)$} seems to be refuted  recently by 
M. Lavrov and P.-S. Lo. in \cite{LaLo:17}.
Furthermore, using the same techniques considered in \cite{Demaineetal:07} we have not been able to deduce an upper bound equal or better than $O(\log n)$ on the $PoA$ for the range $\alpha < n/C$ with $C > 4$. 

\section{Conclusions}

In the study of the upper range, among the new techniques we have introduced, we would like to highlight the coordinate systems. Such systems have served us as an analytical tool to be more precise when calculating and bounding the cost differences and, in particular, they have been crucial in order to show the final lower bound. An interesting open question is whether analogous results could be obtained for the max model using coordinates in order to enlarge the range $\alpha > 129$ for which the $PoA$ is known to be constant. 

Furthermore, when considering the lower range, the last result in section \ref{sec:so-and-ne} provides new insight about the core problem of upper bounding the $PoA$.  It seems that the structural property related with $\epsilon-$distance-almost-uniform graphs is fundamental in order to guarantee  the equilibrium of the  network.


\medskip



\begin{thebibliography}{10}
\bibitem{Albersetal:06}
S.~Albers, S.~Eilts, E.~Even-Dar, Y.~Mansour, and L.~Roditty.
\newblock On {N}ash equilibria for a network creation game.
\newblock {\em {ACM} Trans. Economics and Comput.}, 2(1):2, 2014.

\bibitem{AlonDHKL14}
N.~Alon, E.~D. Demaine, M.~Hajiaghayi, P.~Kanellopoulos, and T.~Leighton.
\newblock Correction: Basic network creation games.
\newblock {\em {SIAM} J. Discrete Math.}, 28(3):1638--1640, 2014.

\bibitem{AlonDHL:10}
N.~Alon, E.~D. Demaine, M.~T. Hajiaghayi, and T.~Leighton.
\newblock Basic network creation games.
\newblock {\em {SIAM} J. Discrete Math.}, 27(2):656--668, 2013.




\bibitem{Demaineetal:07}
E.~D. Demaine, M.~Hajiaghayi, H.~Mahini, and M.~Zadimoghaddam.
\newblock The price of anarchy in network creation games.
\newblock In   PODC 2007, pp. 292--298, 2007.


\bibitem{Demaineetal:07-12}
E.~D. Demaine, M.~T. Hajiaghayi, H.~Mahini, and M.~Zadimoghaddam.
\newblock The price of anarchy in network creation games.
\newblock {\em ACM Transactions on Algorithms}, 8(2):13, 2012.






\bibitem{Fe:03}
A.~Fabrikant, A.~Luthra, E.~N. Maneva, C.~H. Papadimitriou, and S.~Shenker.
\newblock On a network creation game.
\newblock In  PODC  2003, pp. 347--351, 2003.

\bibitem{LaLo:17}
M. Lavrov and P.-S. Lo.
\newblock Distance-Uniform Graphs with Large Diameter.
\newblock ArXiv e-prints,1703.01477,  Mathematics - Combinatorics, 2017.


\bibitem{Lin}
Henry Lin. 
\newblock On the price of anarchy of a network creation game.
\newblock Class final project. December, 2003.



\bibitem{Mihalakmostly}
Mihal{\'a}k, Mat{\'u}{\v{s}} and Schlegel, Jan Christoph.
\newblock The Price of Anarchy in Network Creation Games Is (Mostly) Constant.
\newblock {\em Theory of Computing Systems}, 53(1), pp. 53--72, 2013.

\bibitem{Mihalaktree}
Akaki Mamageishvili and Mat{\'{u}}s Mihal{\'{a}}k and Dominik M{\"{u}}ller.
\newblock Tree Nash Equilibria in the Network Creation Game.
\newblock {\em Internet Mathematics}, 11(4-5), pp. 472--486, 2015.


\end{thebibliography}
%

\end{document}